\newcommand\vldbdoi{10.14778/3476249.3476263}
\newcommand\vldbpages{XXX-XXX}
\newcommand\vldbvolume{14}
\newcommand\vldbissue{11}
\newcommand\vldbyear{2021}
\newcommand\vldbauthors{\authors}
\newcommand\vldbtitle{\shorttitle} 
\newcommand\vldbavailabilityurl{https://github.com/AwesomeYifan/learning-based-set-sim-search}
\newcommand\vldbpagestyle{empty}
\newcommand{\les}{$\mathsf{LES}^3$}
\newcommand{\cS}{\mathcal{D}}
\begin{document}
\title{LES{$^3$}: Learning-based Exact Set Similarity Search}


\author{Yifan Li}
\affiliation{%
  \institution{York University}
}
\email{yifanli@eecs.yorku.ca}

\author{Xiaohui Yu}
\affiliation{%
  \institution{York University}
}
\email{xhyu@yorku.ca}

\author{Nick Koudas}
\affiliation{%
  \institution{University of Toronto}
}
\email{koudas@cs.toronto.edu}

\begin{abstract}
Set similarity search is a problem of central interest to a wide variety of applications such as data cleaning and web search. Past approaches on set similarity search utilize either heavy indexing structures, incurring large search costs or indexes that produce large candidate sets. In this paper, we design a learning-based exact set similarity search approach, \les. Our approach first partitions sets into groups, and then utilizes a light-weight bitmap-like indexing structure, called token-group matrix (TGM), to organize groups and prune out candidates given a query set. In order to optimize pruning using the TGM, we analytically investigate the optimal partitioning strategy under certain distributional assumptions. Using these results, we then design a learning-based partitioning approach called L2P and an associated data representation encoding, PTR, to identify the partitions. We conduct extensive experiments on real and synthetic datasets to fully study \les, establishing the effectiveness and superiority over other applicable approaches.
\end{abstract}

\maketitle

\pagestyle{\vldbpagestyle}
\begingroup\small\noindent\raggedright\textbf{PVLDB Reference Format:}\\
\vldbauthors. \vldbtitle. PVLDB, \vldbvolume(\vldbissue): \vldbpages, \vldbyear.\\
\href{https://doi.org/\vldbdoi}{doi:\vldbdoi}
\endgroup
\begingroup
\renewcommand\thefootnote{}\footnote{\noindent
This work is licensed under the Creative Commons BY-NC-ND 4.0 International License. Visit \url{https://creativecommons.org/licenses/by-nc-nd/4.0/} to view a copy of this license. For any use beyond those covered by this license, obtain permission by emailing \href{mailto:info@vldb.org}{info@vldb.org}. Copyright is held by the owner/author(s). Publication rights licensed to the VLDB Endowment. \\
\raggedright Proceedings of the VLDB Endowment, Vol. \vldbvolume, No. \vldbissue\ %
ISSN 2150-8097. \\
\href{https://doi.org/\vldbdoi}{doi:\vldbdoi} \\
}\addtocounter{footnote}{-1}\endgroup

\ifdefempty{\vldbavailabilityurl}{}{
\vspace{.3cm}
\begingroup\small\noindent\raggedright\textbf{PVLDB Artifact Availability:}\\
The source code, data, and/or other artifacts have been made available at \url{\vldbavailabilityurl}.
\endgroup
}

\section{Introduction}
Given a database $\cS$ of sets each comprised of tokens (a token can be an arbitrary string from a given alphabet $\Sigma$, a unique identifier from a known domain, etc.), a single query set $Q$ (consisting of tokens from the same domain), and a similarity measure $Sim(*)$, the problem of {\em set similarity search} is to identify from $\cS$ those sets that are within a user defined similarity threshold to the query $Q$ (range query) or $k$ sets that are the most similar to $Q$ ($k$NN query). This operation is essential to a wide spectrum of applications, such as {data cleaning \cite{hadjieleftheriou2008hashed,wang2019uni}, data integration \cite{dong2018data,ge2019speculative}, query refinement \cite{sahami2006web}, and digital trace analysis \cite{li2019top}}.  For example, a common task in data cleaning is to perform approximate string matching to identify near duplicates of a given query string. When strings are tokenized, the task of approximate string matching becomes a set similarity search problem. Given its prevalent use, efficient set similarity search is of paramount importance.  A brute-force approach to supporting set similarity search is to scan all the sets in  $\cS$ and evaluate $Sim(*)$ between $Q$ and each set in $\cS$ to obtain the results. When $\cS$ is large or such operations are carried out repeatedly, however, its efficiency becomes a major concern.

Existing  proposals to improve the search performance adopt a filter-and-verify framework: in the filter step, candidate sets are generated based on indexes on $\cS$, and the candidate sets are further examined, computing the similarity between $Q$ and each candidate set in the verify step. {Depending on the indexes used in the filter step, existing methods can be categorized into two groups: inverted index-based and tree-based. Inverted index-based methods build inverted index on tokens and only fetch those sets containing (a subset of) tokens present in the query set as candidates. Tree-based methods \cite{zhang2017efficient,DBLP:journals/tkde/ZhangWWX20} transform sets to scalars \cite{zhang2017efficient} or vectors \cite{DBLP:journals/tkde/ZhangWWX20} and insert them into B+-trees or R-trees, which are then used at query processing time to quickly identify the candidate sets. As verification of a candidate set can be done very efficiently under almost all well-known set similarity measures (e.g., Jaccard, Dice, Cosine similarity) incurring a cost linear in the size of the set, optimization of the filter step is critical.  Unfortunately, existing methods either utilize heavy-weight indexes that incur expensive storage consumption and excessive scanning cost during filtering \cite{DBLP:journals/tkde/ZhangWWX20}, or employ indexes that are light-weight but with very limited pruning efficiency leading to an overly large candidate set \cite{zhang2017efficient}.  Therefore, existing approaches mostly do not solve the set similarity search problem effectively. In fact for realistically low similarity thresholds or large result sizes, as we demonstrate in our experiments, the brute-force approach may perform much better.}

In this paper, we study the problem of set similarity search, and propose a new approach named \les\ (short for \underline{L}earning-based \underline{E}xact \underline{S}et \underline{S}imilarity \underline{S}earch) that strives to reduce the time needed for filtering and increase the pruning efficiency of the index structure at the same time. At a high level, our approach also adopts a filter-and-verify framework; however we advocate the partitioning of the sets in $\cS$ into non-overlapping {\em groups} for filtering. {What differentiates our approach from existing methods is that instead of building complex index structures that could become too expensive to utilize at run-time, we introduce a light-weight index structure called {\em token-group matrix} (TGM); this structure is essentially a collection of bit-maps, to organize all groups, yielding comparable or higher pruning efficiency with only a fraction of storage cost and thus highly scalable.} The TGM captures the association between tokens and groups, and allows us to quickly compute an upper bound on the similarity between the query set $Q$ and any set in a given group. Such upper bounds can then be used for {\em pruning} unrelated groups and directing search to the most promising groups. 

As the search efficiency relies on the pruning efficiency of the TGM which in turn depends on how well the sets are partitioned, we formulate the construction of TGM as an optimization problem, that aims to identify the partitioning of sets that yields the highest pruning efficiency. We first analytically model the base case in which every token has the same probability of appearing in any set. Our developments reveal that the optimal partitioning has two properties: balance and intra-group coherence. We then design a {\em general partitioning objective} ($GPO$) that strives to maximize the pruning efficiency, taking both properties into consideration. 

We showcase that the optimal partitioning is NP-Hard and explore the use of algorithmic and machine learning-based methods to solve the optimization problem. Recent works \cite{kraska2018case,galakatos2019fiting} have demonstrated that machine learning techniques have solid performance in learning the cumulative distribution function (CDF) of real data sets and this property can be used in important data management tasks such as indexing \cite{li2020lisa} and sorting \cite{kristo2020case}. We establish that machine learning techniques can also be utilized to produce superior solutions to hard optimization problems, central to other important indexing tasks such as those in support of set similarity search.

{Complementary to existing works \cite{li2020lisa,galakatos2019fiting} that utilize models such as piece-wise linear regression to learn a CDF, we explore  models that are much better a fit, proposing a unique ensemble learning method suitable for progressive partitioning in our setting.}
The main difficulty in solving the optimization problem is that depending on $\cS$, the number of groups needed for effective pruning can be large, so it is highly challenging to train a single network that would place any given set into one of these groups. 
As such, we propose a new learning framework named L2P (short for \underline{L}earning to \underline{P}artition) to address this challenge. L2P trains a cascade of Siamese networks to hierarchically partition the database $\cS$ into increasingly finer groups until the desired number of groups is reached, resulting in $2^i$ groups at level $i$. The loss function for the Siamese network is specifically designed to minimize the distances between sets in the same group. As the input of a Siamese network has to be a vector, we devise a novel and efficient set representation method, {\em path-table representation} (PTR), that specifically caters to the needs of our optimization problem and proves to be a better fit than applicable embedding techniques. {Although training ML models is known to be time-consuming, as will be shown in Section \ref{sec:exp}, L2P yields better partitioning results with much shorter processing time and only a small fraction of memory usage compared with other widely-adopted partitioning methods.}

We fully develop the query processing algorithms for both range search and $k$NN search based on the TGM, and conduct extensive experiments on synthetic and real data sets to study the properties of our proposal and compare it against other applicable approaches. Our results demonstrate that both the proposed set representation method and the learning framework lead to much stronger pruning efficiency than competing methods. Overall, the proposed \les\ method significantly outperforms the baseline methods in both memory-based and disk-based settings.

In summary, we make the following main contributions.
\begin{itemize}
\item We propose a learning-based approach, \les, for exact set similarity search, which partitions the database into groups to facilitate filtering. Central to \les\ is TGM, a light-weight yet highly effective index that provides stronger pruning efficiency with less cost than state-of-the-art indexes. 
\item We formally analyze the partitioning of the database into groups, casting it as an optimization problem and discussing its distinction from well-studied clustering problems. 

\item We devise a novel learning framework, L2P, to solve the partitioning optimization problem, { which yields significantly better partitioning results while incurring a small fraction of processing time and space cost compared with traditional algorithmic methods}. L2P consists of a cascade of Siamese networks, an architecture that is able to effectively learn a partition of the dataset at different granularities, with up to thousands of groups at the finest level. 
\item We develop a  carefully designed method for set representation, PTR, taking group separation into consideration. PTR theoretically and experimentally facilitates the training of L2P. { Compared with other embedding techniques, PTR is orders of magnitude faster in computing set representations, and thus is more suitable for the target application where millions or billions of sets are involved (see Section \ref{sec:exp}).}
\item We experimentally study the performance of \les, L2P, and PTR, varying parameters of interest, including the network structure, number of groups and result size. We also examine the scalability of \les\ utilizing real world large datasets in addition to previously used set similarity benchmarks. The proposed methods significantly and consistently outperform competing methods across a large variety of settings, {providing up to 5 times faster query processing and requiring up to $90\%$ less space in typical scenarios.}
\end{itemize}

The rest of the paper is organized as follows. In Section 2, we define the terminologies to be used throughout the paper. Section 3 introduces the index structure, TGM. In Section 4, we discuss how the partitioning problem can be formulated as an optimization problem. In Section 5, we propose a machine learning framework to solve the optimization problem. Section 6 presents the query processing algorithms for set similarity search. The experimental evaluation is presented in Section 7. Section 8 discusses related work, and Section 9 concludes this paper.  
\section{Preliminaries}

A {\em set} is an unordered collection of elements called {\em tokens} ({we also consider multiset which may contain duplicate tokens in the paper}).  We use $S$ to denote an arbitrary set and $t$ an arbitrary token. The database $\cS$ is a collection of sets, and all tokens form the token universe $\mathcal{T}$. Two sets are considered similar if the overlap in their tokens exceeds a user-defined threshold. Usually, such overlap is normalized to account for the size difference between sets. Examples of such similarity measures include Jaccard, Dice, and Cosine similarity. To make our discussion more concrete, we focus on Jaccard similarity, and discuss how our approach can be applied to other similarity measures in Section~\ref{sec:applicability}. Next we give the formal problem definitions.

\begin{definition}\textbf{\textit{k}NN Search}. Given the database of sets $\cS$, a set similarity measure $Sim(*)$, a query set\footnote{Without loss of generality, we assume throughout the paper that a query set consists of tokens existing in $\mathcal{T}$ only.  The case of query sets containing tokens not in $\mathcal{T}$ can be handled similarly and is discussed in Section \ref{sec:index}.} $Q$, and a result size $k$, find a collection $\mathcal{R}_Q^{k}\subseteq\cS$ s.t. $\vert \mathcal{R}_Q^k\vert=k$ and $\forall S\in\mathcal{R}_Q^k$, $\forall S'\in\cS-\mathcal{R}_Q^k$, $Sim(Q,S)\ge Sim(Q,S')$.
\end{definition}
\begin{definition}\textbf{Range Search}. Given the database of sets $\cS$, a set similarity measure $Sim(*)$, a query set $Q$, and a threshold $\delta$, find a collection $\mathcal{R}_Q^{\delta}\subseteq\cS$ s.t. $\forall S\in\mathcal{R}_Q^{\delta}$, $Sim(Q,S)\ge\delta$, and $\forall S'\in\cS-\mathcal{R}_Q^{\delta}$, $Sim(Q,S')<\delta$.
\end{definition}

Our goal is to accelerate the process of identifying the result collection $\mathcal{R}_Q^{k}$ or $\mathcal{R}_Q^{\delta}$ for the given $k$ or $\delta$. In general, the query answering process consists of a filtering step (choosing candidate sets) and a verification step (comparing candidate sets with the query set). The cost of the verification step depends directly on the pruning efficiency of the search process, which measures the proportion of sets in $\cS$ being pruned in the filtering step. 
\begin{definition}\textbf{Pruning Efficiency (PE)}. Let $\mathcal{S}_Q$ be the collection of candidate sets for which the similarities to $Q$ must be computed in the process of identifying $\mathcal{R}_Q^{k}$ or $\mathcal{R}_Q^{\delta}$. Then the pruning efficiency of query processing, denoted as $PE$, is $\frac{\vert\mathcal{D}\vert-(\vert\mathcal{S}_Q\vert-k)}{\vert\mathcal{D}\vert}$ for $k$NN query, or $\frac{\vert\mathcal{D}\vert-(\vert\mathcal{S}_Q\vert-\vert\mathcal{R}_Q^{\delta}\vert)}{\vert\mathcal{D}\vert}$ for range query.
\end{definition}

Clearly $PE$ falls in the range $[0,1]$. All other things being equal, a higher $PE$ leads to a lower verification cost. Our focus in this paper is therefore to design an approach for set similarity search that enjoys high PE and low filtering and verification cost.
\section{Token-Group Matrix}
\label{sec:token-group matrix}
The basic idea of our approach is to partition the sets in $\cS$ into non-overlapping groups and index them properly, so that the search space can be pruned (i.e., certain groups can be quickly eliminated from further consideration) to speed up query processing. At the heart of our proposal is the token-group matrix (TGM), the index that records the relationship between tokens and the groups resulting from partitioning. In this section, we present the index structure and discuss its applicability across different similarity measures. 
\subsection{Index Structure}
\label{sec:index}
Assume for now, that $\cS$ is already partitioned into $n$ non-overlapping groups, $\mathcal{G}_1,\cdots \mathcal{G}_n$; we defer the discussion of the strategies for partitioning to the next section. The goals of the index are simplicity (so that it incurs little computational and storage overhead) and effectiveness (providing high pruning efficiency). To this end, 
the TGM,  $M$, with size $n*\vert\mathcal{T}\vert$, is constructed in the following way:

\begin{footnotesize}
\begin{equation}
    M[g,t]=
    \begin{cases}
      1, & \text{if}\ \exists S\in\mathcal{G}_g\ \text{s.t.}\  t\in S\\
      0, & \text{otherwise}
    \end{cases}
    \quad
\end{equation}
\end{footnotesize}
where $t\in [1,\vert\mathcal{T}\vert]$ and $g\in [1,n]$.

An example of TGM is given in Figure~\ref{fig:tgm}, where $\mathcal{T}=\{A,B,C,D\}$ and six sets are partitioned into two groups $\mathcal{G}_0$ and $\mathcal{G}_1$.

\begin{figure}[htp]
\centering
\includegraphics[width=2.2in]{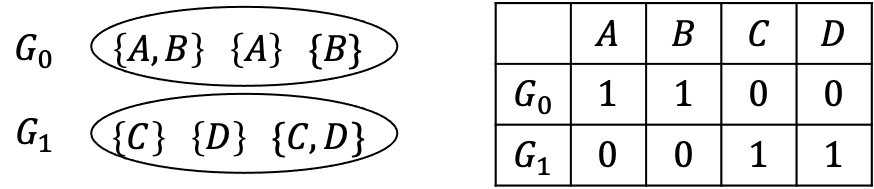}
\caption{An example of TGM}
\vspace{-0.15in}
\label{fig:tgm}
\end{figure}
The design of the TGM is based on the observation that when deciding whether a group of sets is a candidate for a query set or not, the only information needed is the number of common tokens they share. Such information can be easily obtained by visiting some elements in $M$, and thus we can compute a {\em similarity upper bound} between a query set $Q$ and a group of sets $\mathcal{G}_g$, which are useful in pruning the search space, as follows: 

\begin{footnotesize}
\begin{equation}
    \label{eq:ub}
    UB(Q,\mathcal{G}_g)=\frac{\sum_{t\in Q}M[g,t]}{\vert Q\vert}
\end{equation}
\end{footnotesize}

Continuing the example above, we assume that the query set is $\{A\}$ and $\mathcal{G}_0$ and $\mathcal{G}_1$ in Figure \ref{fig:tgm} are candidates. Then the similarity bound between the query set and $\mathcal{G}_0$ is $\frac{M[\mathcal{G}_0,A]}{\vert\{A\}\vert}=1$, and the upper bound for $\mathcal{G}_1$ is $\frac{M[\mathcal{G}_1,A]}{\vert\{A\}\vert}=0$.

{Although we assume $Q$ contains tokens in $\mathcal{T}$ only, the case where this does not hold can be handled by letting $M[*,t']=0$ for $t'\notin\mathcal{T}$ in Equation (\ref{eq:ub}). No further changes are required.}



In the query processing step, if the upper bound of group $\mathcal{G}_g$ exceeds a threshold (can be $\delta$ in range query, or the minimal $k$NN similarity found so far), we compare all sets in $\mathcal{G}_g$ with the query set. The time complexity of computing the similarity bounds between the query set and all groups of sets is $O(n\vert Q\vert)$. It in general costs much less than computing the similarity between the query set and each set in $\cS$, as  the number of groups is usually orders of magnitude smaller than $|\cS|$.  

In terms of space consumption, each element in TGM is represented by a bit, and TGM is essentially a bitmap index. It is evident that $M$ is usually a very sparse matrix as each set usually contains a very small portion of the tokens from the universe. When necessary, many existing compression techniques \cite{salomon2004data,salomon2010handbook} can be employed to reduce the size of $M$.


\subsection{Applicability}\label{sec:applicability}
Although Equation (\ref{eq:ub}) is computed assuming Jaccard index as the similarity metric, TGM works with many other set similarity measures as well, including measures that do not follow the triangle inequality, such as cosine similarity.

{
\begin{theorem}
\label{theorem:applicable}
For $\forall Q,S\subseteq\mathcal{T}$, let $R=Q\cap S$. TGM is applicable to set similarity search tasks with measure $Sim(*)$ if
\begin{enumerate}
    \item $Sim(Q,R)\ge Sim(Q,S)$, and
    \item $\forall R'\subset R$, $Sim(Q,R)\ge Sim(Q,R')$
\end{enumerate}
\end{theorem}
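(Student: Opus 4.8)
The plan is to show that the similarity upper bound $UB(Q,\mathcal{G}_g)$ defined in Equation~(\ref{eq:ub}) is a genuine upper bound on $Sim(Q,S)$ for every $S\in\mathcal{G}_g$, under conditions (1) and (2); once this is established, the pruning rule used in query processing (discard $\mathcal{G}_g$ when $UB(Q,\mathcal{G}_g)$ falls below the current threshold) is correct, which is exactly what ``TGM is applicable'' means. So the core claim to prove is: for any $S\in\mathcal{G}_g$, $Sim(Q,S)\le UB(Q,\mathcal{G}_g)$.

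First I would fix a query $Q$ and a set $S\in\mathcal{G}_g$, and let $R=Q\cap S$. By condition (1), $Sim(Q,S)\le Sim(Q,R)$, so it suffices to bound $Sim(Q,R)$. Next I would relate $R$ to the tokens ``visible'' in the TGM row for $\mathcal{G}_g$. Define $T_g=\{t\in Q : M[g,t]=1\}$, i.e.\ the query tokens that appear in \emph{some} set of $\mathcal{G}_g$. Since $S\in\mathcal{G}_g$, every token of $R=Q\cap S$ appears in a set of $\mathcal{G}_g$ (namely $S$ itself), so $R\subseteq T_g$. I would split into two cases: if $R=T_g$, then $Sim(Q,R)=Sim(Q,T_g)$; if $R\subsetneq T_g$, then by condition (2) applied with $T_g$ playing the role of the larger overlap set\footnote{Condition (2) as stated is about subsets of $R=Q\cap S$; the argument needs the monotonicity ``$Sim(Q,\cdot)$ does not decrease as the overlap set grows within $Q$,'' which is the natural reading. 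I would state this cleanly as a single monotonicity property $Sim(Q,R')\le Sim(Q,R'')$ whenever $R'\subseteq R''\subseteq Q$, and note conditions (1)+(2) give it.} we get $Sim(Q,R)\le Sim(Q,T_g)$. Either way, $Sim(Q,S)\le Sim(Q,R)\le Sim(Q,T_g)$.

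Finally I would observe that $|T_g|=\sum_{t\in Q}M[g,t]$ by definition of $T_g$ and of the TGM, and that $T_g\subseteq Q$, so $Q\cap T_g=T_g$ and hence for Jaccard $Sim(Q,T_g)=|T_g|/|Q\cup T_g|=|T_g|/|Q|=\big(\sum_{t\in Q}M[g,t]\big)/|Q|=UB(Q,\mathcal{G}_g)$. For a general measure satisfying (1)–(2) one still concludes $Sim(Q,S)\le Sim(Q,T_g)=UB(Q,\mathcal{G}_g)$ once $UB$ is (re)defined as $Sim(Q,T_g)$ — but since the paper writes $UB$ specifically in Jaccard form, I would keep the Jaccard computation explicit and remark that the same template yields the correct $UB$ expression for Dice and cosine by substituting their formulas for $Sim(Q,T_g)$.

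The main obstacle is not any hard computation but getting the logical shape of condition (2) exactly right: as literally written it quantifies over $R'\subset R$, yet the proof needs to compare $Sim(Q,R)$ against $Sim(Q,T_g)$ where $T_g$ is a \emph{superset} of $R$, not a subset. I expect the cleanest fix is to reinterpret (1) and (2) together as the monotonicity statement $R'\subseteq R''\subseteq Q \Rightarrow Sim(Q,R')\le Sim(Q,R'')$ (condition (1) is the special case $R''=S$ ``capped'' at $Q$, condition (2) is the special case within $R$), verify that Jaccard, Dice, and cosine all satisfy this, and run the two-line argument above. The rest — that $\sum_{t\in Q}M[g,t]$ counts exactly $|T_g|$, and that $R\subseteq T_g$ because $S\in\mathcal{G}_g$ — is immediate from the definitions of TGM and of group membership.
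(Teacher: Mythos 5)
Your proof is correct and follows essentially the same route as the paper's: both identify the set of query tokens visible in the group (your $T_g$; the paper redefines $R=Q\cap\bigcup_{S\in\mathcal{G}_g}S$), observe that $Q\cap S\subseteq T_g$ for every $S$ in the group, and chain conditions (1) and (2) to get $Sim(Q,S)\le Sim(Q,T_g)=UB(Q,\mathcal{G}_g)$. The quantifier worry in your footnote dissolves without reinterpreting condition (2): since both conditions are quantified over all $Q,S\subseteq\mathcal{T}$, you may instantiate the $S$ in condition (2) with $\bigcup_{S'\in\mathcal{G}_g}S'\subseteq\mathcal{T}$, whose intersection with $Q$ is exactly $T_g$, so $Q\cap S$ is a legitimate $R'\subset T_g$ — which is precisely what the paper's proof does implicitly.
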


\begin{proof}
We prove that for an arbitrary query set $Q$ and an arbitrary group $\mathcal{G}_g$, we can compute a similarity upper bound $UB(Q,\mathcal{G}_g)$ with TGM using Equation (\ref{eq:ub}) such that $\forall S\in\mathcal{G}_g$, $UB(Q,\mathcal{G}_g)\ge Sim(Q,S)$. Let $R=\{t|t\in Q \wedge \exists S\in\mathcal{G}_g,t\in S\}$, then we know $\forall S\in\mathcal{G}_g$, $Q\cap S\subseteq R$. If $Q\cap S= R$, then clearly $Sim(Q,R)\ge Sim(Q,S)$; if $Q\cap S=R'\subset R$, then $Sim(Q,R)\ge Sim(Q,R')\ge Sim(Q,S)$. In either case, $Sim(Q,R)$ upper bounds $Sim(Q,S)$, and thus we can use $Sim(Q,R)$ as $UB(Q,\mathcal{G}_g)$. Since it is possible that $R=S$, in which case $Sim(Q,R)=Sim(Q,S)$, the bound $UB(Q,\mathcal{G}_g)$ is tight, even in multiset settings.
\end{proof}
}

For example, let $Q=\{t_1,t_2,t_3\}$ and $Q\cap S=\{t_1,t_2\}$. Then with Jaccard similarity, the set with the maximal similarity to $Q$ is $\{t_1,t_2\}$ and the upper bound is $\frac{2}{3}$; with cosine similarity, the set with the maximal similarity to $Q$ is also $\{t_1,t_2\}$, but the upper bound is $\frac{2}{\sqrt{3*2}}\approx 0.82$. Note that although most similarity measures satisfy the TGM Applicability Property, some exceptions do exist. One such example is the learned metric \cite{kim2019deep}  which takes two samples (e.g., images) as the input and predicts their similarity.

In what follows, we call the two properties listed in Theorem \ref{theorem:applicable} the {\em TGM Applicability Property}. Note that the token universe $\mathcal{T}$ does not need to be static.  We will discuss how to adapt TGM to deal with cases where $\mathcal{T}$ is dynamically changing in Section~\ref{sec:updates}.


\section{Optimizing Partitioning}
\label{sec:paropt}

{
We analyze how to optimize partitioning to provide higher pruning efficiency. We discuss desired properties of the partitioning, and develop the objective function for the partitioning optimization problem that will guide the development of effective partitioning strategies. To make our formal analysis tractable, we make assumptions regarding the token distribution; nonetheless, as will be demonstrated by our experimental results in Section~\ref{sec:exp}, the optimization objectives and strategies thus developed are also expected to perform well when the assumptions do not hold. 
\subsection{The Case of Uniform Token Distribution}
\label{sec:dpp}
We formally analyze the effect of partitioning on pruning efficiency when the following assumption on token distribution holds. 
\begin{definition} \textbf{Uniform Token Distribution Assumption}.
The probabilities that different tokens belong to an arbitrary set are identical and independent. More specifically, $\forall t_i,t_j\in\mathcal{T}$, $\forall S\in\cS$, $P(t_i\in s)=P(t_j\in S)$, and $P(t_i\in S\vert t_j\in s)=P(t_i\in S\vert t_j\notin S)$.
\end{definition}

For an arbitrary query $Q$, the expected pruning efficiency can be computed as follows:

\begin{footnotesize}
\begin{equation}
\label{eq:range-ub-analysis-delta}
E[PE] = \sum_{g=1}^{n}\vert\mathcal{G}_g\vert (1-UB(Q,\mathcal{G}_g))
\end{equation}
\end{footnotesize}




Given the way the TGM is constructed, we rewrite Equation (\ref{eq:ub}) in the following way to ease subsequent discussion:
\begin{footnotesize}
\begin{equation}
\label{eq:gs}
UB(Q,\mathcal{G}_g)=\frac{\sum_{t\in Q}M[t,g]}{\vert Q\vert}=\frac{\vert GS_g\cap Q\vert}{\vert Q\vert},\ GS_g=\bigcup_{S\in\mathcal{G}_g}S
\end{equation}
\end{footnotesize}

Accordingly, we rewrite Equation (\ref{eq:range-ub-analysis-delta}) as follows:
\begin{footnotesize}
\begin{equation}\label{eq:range-ub-analysis-delta-extended}
E[PE]=\sum_{g=1}^{n}\vert\mathcal{G}_g\vert (1-\frac{\vert GS_g\cap Q\vert}{\vert Q\vert})
\end{equation}
\end{footnotesize}

As we assume $Q$ follows the same distribution as $\cS$, $E[PE]$ over all possible $Q$ can be estimated by the following equation:
\begin{footnotesize}
\begin{equation} \label{eq:expected-pe}
\frac{\sum_{Q\in\cS}\sum_{g=1}^{n}\vert\mathcal{G}_g\vert (1-\frac{\vert GS_g\cap Q\vert}{\vert Q\vert})}{\vert\cS\vert}
\end{equation}
\end{footnotesize}

Since $\vert\cS\vert$ is a constant, we keep the nominator of Equation (\ref{eq:expected-pe}) only, and adjust the order as follows:
\begin{footnotesize}
\begin{equation} \label{eq:expected-pe-omitted}
\sum_{g=1}^{n}\vert\mathcal{G}_g\vert \sum_{Q\in\cS}(1-\frac{\vert GS_g\cap Q\vert}{\vert Q\vert})
\end{equation}
\end{footnotesize}

To ease following analysis, we define term $F$ in Equation (\ref{eq:expected-pe-rewrite}), and claim that maximizing Equation (\ref{eq:expected-pe-omitted}) (and thus maximizing the pruning efficiency) is equivalent to minimizing $F$:
\begin{footnotesize}
\begin{equation} \label{eq:expected-pe-rewrite}
F=\sum_{g=1}^{n}\vert\mathcal{G}_g\vert\sum_{Q\in\cS} \frac{\vert GS_g\cap Q\vert}{\vert Q\vert}
\end{equation}
\end{footnotesize}


We derive several properties regarding the partitioning from Equation (\ref{eq:expected-pe-rewrite}) so as to design practical partitioning algorithms.
\begin{theorem} 
\label{theorem:pe-partitioning}
In a database that satisfies the uniform token distribution assumption, the partitioning that minimizes Equation (\ref{eq:expected-pe-rewrite}) produces groups with equal size (or differ by at most 1).
\end{theorem}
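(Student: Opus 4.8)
The plan is to show that, under the uniform token distribution assumption, $F$ depends on the partition only through the multiset of group sizes $m_g := |\mathcal{G}_g|$, reducing the problem to minimizing a \emph{symmetric convex} function of $(m_1,\dots,m_n)$ subject to $\sum_g m_g=|\cS|$ with $n$ fixed; the balancing conclusion then follows from a routine integer exchange argument.

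For the reduction, I would rewrite the inner sum of Equation~(\ref{eq:expected-pe-rewrite}) via per-token weights: since $|GS_g\cap Q|=\sum_{t\in GS_g}\mathbbm{1}[t\in Q]$,
\begin{footnotesize}
\begin{equation*}
\sum_{Q\in\cS}\frac{|GS_g\cap Q|}{|Q|}\;=\;\sum_{t\in GS_g} w(t),\qquad w(t):=\sum_{Q\in\cS}\frac{\mathbbm{1}[t\in Q]}{|Q|}.
\end{equation*}
\end{footnotesize}
Under the uniform assumption all tokens are exchangeable, so $E[w(t)]$ is the same constant $w>0$ for every $t$. Moreover a token is absent from $GS_g=\bigcup_{S\in\mathcal{G}_g}S$ exactly when it is absent from each of the $m_g$ member sets, an independent event of probability $(1-p)^{m_g}$ with $p:=P(t\in S)$ (common to all tokens and sets by the assumption); hence $E[|GS_g|]=|\mathcal{T}|\bigl(1-(1-p)^{m_g}\bigr)$ depends on the partition only through $m_g$. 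Substituting expected quantities into Equation~(\ref{eq:expected-pe-rewrite}) gives $F=c\sum_{g=1}^{n}\phi(m_g)$ for a positive constant $c$, where $\phi(m):=m\bigl(1-(1-p)^m\bigr)$.

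Next I would verify that $\phi$ is strictly convex on the integers in the regime of interest. Writing $q=1-p$, a short computation gives the second difference $\phi(m{+}1)-2\phi(m)+\phi(m{-}1)=q^{m-1}p\,\bigl(2-p(m{+}1)\bigr)$, which is positive whenever $m+1<2/p$ --- always the case for the sparse token distributions that motivate this work. Given strict convexity, the theorem follows by an exchange argument: suppose an optimal partition had two groups with $m_i\ge m_j+2$, and move one set from $\mathcal{G}_i$ to $\mathcal{G}_j$. Since $F$ depends only on the group sizes, the change in the objective equals $c\bigl((\phi(m_j{+}1)-\phi(m_j))-(\phi(m_i)-\phi(m_i{-}1))\bigr)$; the forward difference $\phi(k{+}1)-\phi(k)$ is strictly increasing by convexity and $m_i-1\ge m_j+1>m_j$, so this change is strictly negative, contradicting optimality. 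Hence in the minimizer every pair of group sizes differs by at most $1$.

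The step I expect to be the main obstacle is the reduction, specifically arguing cleanly that (under the assumption) the objective is a function of the size profile alone. The point that makes it work is the division by $|Q|$: conditioning on $|Q|=\ell$, the tokens of $Q$ form a uniformly random $\ell$-subset of $\mathcal{T}$, so $E\bigl[|GS_g\cap Q|/|Q|\mid |Q|=\ell\bigr]=|GS_g|/|\mathcal{T}|$ is independent of $\ell$, and the length dependence vanishes upon taking expectations. One must also dispose of the ``self'' terms $Q\in\mathcal{G}_g$ (for which the ratio is exactly $1$ rather than $1-q^{m_g}$) and the mild dependence between $w(t)$ and the event $\{t\in GS_g\}$; both are lower order when $|\cS|\gg m_g$ and affect neither the convexity nor the balancing conclusion. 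Once the reduction to $\sum_g\phi(m_g)$ is in place, the convexity check and the exchange argument are standard.
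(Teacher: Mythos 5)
Your proof is correct in the regime it identifies, and it takes a genuinely more quantitative route than the paper's. Structurally both arguments are the same integer exchange: move one set between two groups of unequal size and show the objective moves the wrong way. The difference is in how the key monotonicity of forward differences is justified. The paper keeps the empirical sums $F_g=\vert\mathcal{G}_g\vert\sum_{Q}\vert GS_g\cap Q\vert/\vert Q\vert$ and argues only that the inner sum is nondecreasing in $\vert\mathcal{G}_g\vert$ (since $GS_g$ is a union over members), declaring the resulting ``super-linear growth'' sufficient; strictly speaking, $f(m)=m\psi(m)$ with $\psi$ nondecreasing need not have increasing forward differences, so the exchange step is asserted rather than proved. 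You instead pass to expectations, derive the closed form $\phi(m)=m\bigl(1-(1-p)^m\bigr)$ for the per-group contribution, and verify convexity by an explicit second difference $q^{m-1}p\bigl(2-p(m+1)\bigr)$ --- which both fills the gap and exposes a real limitation the paper glosses over: convexity (and hence the balancing conclusion) holds only when $m+1<2/p$. Outside that regime $\phi$ is concave and an unbalanced split actually yields a smaller objective (one saturated group plus one small prunable group beats two saturated halves), so the theorem as stated is not unconditional under this model; it is saved in practice because $p\approx(\text{avg set size})/\vert\mathcal{T}\vert$ is tiny and group sizes are far below $2/p$. Your reduction does trade the paper's pathwise argument for an expectation-level one (with the self-term and the $w(t)$-vs-$GS_g$ dependence handled only asymptotically), but you flag this honestly and it is no less rigorous than the original. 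Net: same skeleton, but your version makes the convexity explicit, is checkable, and correctly delimits where the balancing claim is valid.
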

\begin{proof}
We consider the special case where $\mathcal{D}$ is partitioned into two groups $\mathcal{G}_1$ and $\mathcal{G}_2$, and $|\mathcal{G}_1|\le|\mathcal{G}_2|$. The $F$ value of such a partitioning is:
\begin{footnotesize}
\begin{equation}
    F=F_1+F_2= \vert\mathcal{G}_1\vert\sum_{Q\in\cS} \frac{\vert GS_1\cap Q\vert}{\vert Q\vert}+\vert\mathcal{G}_2\vert\sum_{Q\in\cS} \frac{\vert GS_2\cap Q\vert}{\vert Q\vert}
\end{equation}
\end{footnotesize}

Next we move a set $S$ from $\mathcal{G}_1$ to $\mathcal{G}_2$ and prove that such movement increases the $F$ value. We know that if $S$ is moved from $\mathcal{G}_1$ to $\mathcal{G}_2$, $F_1$ would decrease and $F_2$ would increase. And since  $|\mathcal{G}_1|\le|\mathcal{G}_2|$, equivalently we can prove that $\vert\mathcal{G}_i\vert\sum_{Q\in\cS} \frac{\vert GS_i\cap Q\vert}{\vert Q\vert}$ grows super-linearly with respect to $|\mathcal{G}_i|$, or $\sum_{Q\in\cS} \frac{\vert GS_i\cap Q\vert}{\vert Q\vert}$ grows with $|\mathcal{G}_i|$. Given the construction of $GS_i$ in Equation (\ref{eq:gs}), this is evidently true. Therefore, the $F$ value increases after the movement of $S$.

The above discussion can be naturally extended to multi-groups by moving one set from a small group to a large group each time, with the $F$ value increasing and the pruning efficiency decreasing during the process. In conclusion, balanced partitioning results yield the highest pruning efficiency.
\end{proof}
Even though the optimal partitioning is expected to produce groups with almost equal sizes, evidently balance is not the only desired property, according to Equation (\ref{eq:expected-pe-rewrite}). We temporarily omit the $\vert\mathcal{G}_g\vert$ in Equation (\ref{eq:expected-pe-rewrite}) and discuss other properties the partitioning must satisfy in order to provide higher pruning efficiency.
\begin{theorem}
\label{throrem:pe-sim}
In a database that satisfies the uniform token distribution assumption, the partitioning that minimizes the following objective provides the highest pruning efficiency:

\begin{footnotesize}
\begin{equation} \label{eq:pe-sim}
    \sum_{g=1}^n \vert\bigcup_{S\in\mathcal{G}_g}S\vert
\end{equation}
\end{footnotesize}
\end{theorem}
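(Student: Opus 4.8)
The plan is to show that, under the Uniform Token Distribution Assumption, the group-dependent factor $\sum_{Q\in\cS}\frac{\vert GS_g\cap Q\vert}{\vert Q\vert}$ in Equation~(\ref{eq:expected-pe-rewrite}) is, in expectation over the query distribution, simply \emph{proportional to} $\vert GS_g\vert=\vert\bigcup_{S\in\mathcal{G}_g}S\vert$. Once this is established, stripping off the balance weight $\vert\mathcal{G}_g\vert$ (whose role has already been settled by Theorem~\ref{theorem:pe-partitioning}) collapses the objective, up to a global constant, to $\sum_{g=1}^n\vert\bigcup_{S\in\mathcal{G}_g}S\vert$, and minimizing that quantity therefore maximizes the expected pruning efficiency.

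First I would invoke Theorem~\ref{theorem:pe-partitioning} to justify holding all group sizes essentially equal, so that the remaining quantity to be minimized is $\sum_{g=1}^n\sum_{Q\in\cS}\frac{\vert GS_g\cap Q\vert}{\vert Q\vert}$. Next I analyze a single inner summand, treating $GS_g$ as a fixed token set with $\vert GS_g\vert=m$ and $Q$ as a random query drawn from the same distribution as $\cS$. Conditioning on the query size $\vert Q\vert=\ell$, the uniform-and-independent clauses of the assumption make every token of $\mathcal{T}$ equally likely to belong to $Q$, so $P(t\in Q\mid \vert Q\vert=\ell)=\ell/\vert\mathcal{T}\vert$ for each $t$; hence $E[\,\vert GS_g\cap Q\vert \mid \vert Q\vert=\ell\,]=m\ell/\vert\mathcal{T}\vert$, and therefore $E\!\left[\frac{\vert GS_g\cap Q\vert}{\vert Q\vert}\,\middle|\,\vert Q\vert=\ell\right]=m/\vert\mathcal{T}\vert$, which does not depend on $\ell$. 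Removing the conditioning gives $E\!\left[\frac{\vert GS_g\cap Q\vert}{\vert Q\vert}\right]=\vert GS_g\vert/\vert\mathcal{T}\vert$.

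Then I would replace the empirical sum over $Q\in\cS$ by $\vert\cS\vert$ times this expectation — exactly the estimator already adopted in Equation~(\ref{eq:expected-pe}) — to get $\sum_{Q\in\cS}\frac{\vert GS_g\cap Q\vert}{\vert Q\vert}\approx\frac{\vert\cS\vert}{\vert\mathcal{T}\vert}\,\vert GS_g\vert$, and summing over the $n$ groups yields $\sum_{g=1}^n\sum_{Q\in\cS}\frac{\vert GS_g\cap Q\vert}{\vert Q\vert}\approx\frac{\vert\cS\vert}{\vert\mathcal{T}\vert}\sum_{g=1}^n\vert GS_g\vert$. Since $\vert\cS\vert$ and $\vert\mathcal{T}\vert$ are constants of the instance, minimizing the left-hand side is equivalent to minimizing $\sum_{g=1}^n\vert\bigcup_{S\in\mathcal{G}_g}S\vert$, which is the claim.

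The main obstacle I expect is the passage from the data-dependent sum to its expectation: the theorem is really a statement about the distributional model, so I must state carefully that $GS_g$ is held fixed while $Q$ ranges over the distribution (a query that happens to lie inside $\mathcal{G}_g$ contributes only a lower-order correction to $\vert GS_g\vert$), and that the replacement $\sum_{Q\in\cS}(\cdot)\approx\vert\cS\vert\,E[\cdot]$ is precisely the approximation used to pass from Equation~(\ref{eq:expected-pe}) onward. A secondary point to treat cleanly is the conditioning argument itself: the cancellation of $\ell$ is what frees the result from the query-size distribution, and it relies specifically on the independence clause of the Uniform Token Distribution Assumption — without it $\vert GS_g\cap Q\vert$ and $\vert Q\vert$ could be correlated and the expectation of their ratio would no longer reduce to $m/\vert\mathcal{T}\vert$.
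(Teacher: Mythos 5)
Your proposal is correct and follows essentially the same route as the paper's proof: assume balance (from Theorem~\ref{theorem:pe-partitioning}) to drop the $\vert\mathcal{G}_g\vert$ factor, then use the uniform token distribution to argue that $\sum_{Q\in\cS}\frac{\vert GS_g\cap Q\vert}{\vert Q\vert}$ is proportional to $\vert GS_g\vert$. Your conditioning-on-$\vert Q\vert$ argument just makes explicit the cancellation of the query size that the paper asserts in one line ("all tokens appear in $Q$ with the same probability"), so it is a more careful rendering of the same step rather than a different approach.
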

\begin{proof}
Given the assumption that all groups are balanced, minimizing Equation (\ref{eq:expected-pe-rewrite}) is equivalent to minimizing 

\begin{footnotesize}
\begin{equation}
\label{eq:pe-equi}
\sum_{g=1}^n\sum_{Q\in\cS}\frac{\vert GS_g\cap Q\vert}{\vert Q\vert}
\end{equation}
\end{footnotesize}

Since $Q$ follows the uniform token distribution as well, which means that all tokens appear in $Q$ with the same probability, Equation (\ref{eq:pe-equi}) is proportional to the following equation:

\begin{footnotesize}
\begin{equation}
\label{eq:pe-equi-rewrite}
\sum_{g=1}^n\vert GS_g\cap \mathcal{T}\vert=\sum_{g=1}^n\vert GS_g\vert=\sum_{g=1}^n \vert\bigcup_{S\in\mathcal{G}_g}S\vert,
\end{equation}
\end{footnotesize}
where $\mathcal{T}$ denotes the token universe.

Thus, we can maximize PE by minimizing Equation (\ref{eq:pe-sim}).
\end{proof}
}
In summary, we have the following two desired properties regarding the partitioning of database $\cS$. 
\begin{itemize}
    \item Property 1: Groups are balanced;
    \item Property 2: $U=\sum_{g=1}^n \vert\bigcup_{S\in\mathcal{G}_g}S\vert$ is minimized.
\end{itemize}
\vspace{-0.1in}
\subsection{The General Case}
\label{sec:unifiedobj}
The analysis in the preceding section depends on the uniform token distribution assumption. In real-life datasets, this assumption does not hold. However, following the same methodology to derive a formal treatment of an arbitrary set/token distribution would be challenging as a realistic mathematical model of arbitrary set/token distributions would be hard to justify. Although the two properties identified above may not be true for optimal partitioning in the general case, we draw inspirations from them and propose a heuristic objective function that strives to maximize PE.

In essence Property 2 directs that the more similar (in terms of token composition) the sets are within a group, the better. We thus design a {\em general partitioning objective ($GPO$)} we wish to minimize reflecting this property:

\begin{footnotesize}
\begin{equation}
    \label{eq:unified-obj}
    GPO=\sum_{g=1}^{n}\sum_{S_x\in\mathcal{G}_g}\sum_{S_y\in\mathcal{G}_g}(1-Sim(S_x,S_y)),
\end{equation}
\end{footnotesize}
where $Sim(*)$ can be any measures discussed in Section \ref{sec:applicability}.

Intuitively, $GPO$ aims to minimize the sum of the intra-group pair-wise distances, where {\em distance} is defined as $1-Sim(*)$. This is similar to Property 2. As an example, consider two groups $\mathcal{G}_i$ and $\mathcal{G}_j$ with $\vert\mathcal{G}_i\vert=\vert\mathcal{G}_j\vert$. Assume that  $Sim(*)$ is Jaccard similarity, and we are to place a new set $S$ into one of the two groups. Then, if $\sum_{S_i\in\mathcal{G}_i}(1-Sim(S,S_i))<\sum_{S_j\in\mathcal{G}_j}(1-Sim(S,S_j))$, that would mean $S$ shares more common tokens with sets in group $\mathcal{G}_i$, and thus inserting $S$ into group $\mathcal{G}_i$ helps to minimize $U$.

However, considering only Property 2 results in highly skewed partitioning results, as placing all sets in the same group provides the minimal $U$ (which equals to $\vert \mathcal{T}\vert$). Evidently, Property 1 is used to prevent such skewed partitioning in the uniform case. Luckily, $GPO$ enjoys a similar functionality: placing all sets in the same group provides $GPO=\sum_{S_x,S_y\in\cS}(1-Sim(S_x,S_y))$, which is the maximal possible $GPO$, and thus such a partitioning is never the optimal in terms of $GPO$. Thus, the design of $GPO$ implicitly incorporates both Property 1 and Property 2.


In order to better appreciate the distinctive value of the proposed partitioning objective, we compare $GPO$ with $k$-medians, perhaps the most popular clustering technique, and show by an example how optimizing $GPO$ leads to better results. We use a database of 21 sets, and the partitioning results based on different clustering objectives are given in Figure \ref{fig:illus-compare}. Each set is represented by a point in the plot, and to better visualize the results we replace $(1-Sim(*))$ with Euclidean distance.
\begin{figure}
\centering
    \subfloat[Groups produced by $GPO$]{\includegraphics[width=2in]{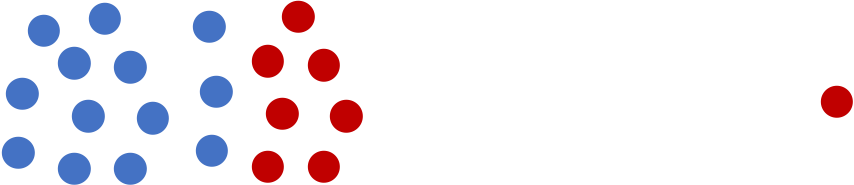}}
    \vspace{-0.1in}
    \subfloat[Groups produced by $k$-medians]{\includegraphics[width=2in]{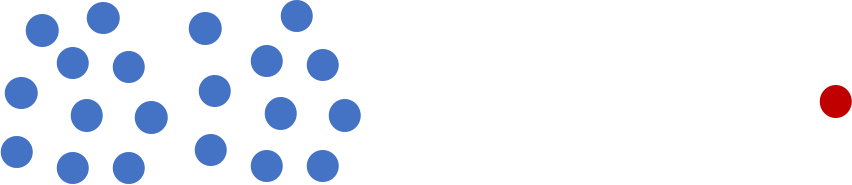}}
\caption{Comparison of different partitioning results }
\label{fig:illus-compare}
\vspace{-0.05in}
\end{figure}

Assume that the query is to identify the nearest neighbors of all 21 points. According to the search strategy of \les\ given in Section \ref{sec:index}, all points in the same group are candidates of each other. Therefore, with the clustering results given in Figure \ref{fig:illus-compare}(b), the total number of distance calculations is $20*20+1*1=401$, while with the partitioning results in Figure \ref{fig:illus-compare}(a) the number is $13*13+8*8=233$. Clearly, the results based on Equation (\ref{eq:unified-obj}) have better pruning efficiency. 

\begin{theorem}
\label{theorem:np}
Given a database of sets $\cS$ minimizing $GPO$ on $\cS$ is NP-complete.
\end{theorem}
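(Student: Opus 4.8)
The plan is to prove NP-completeness by reduction from a well-known NP-hard partitioning/clustering problem, most naturally from a graph problem that can be phrased as "partition the vertices into blocks to minimize a sum of intra-block pair costs." The canonical candidate is the decision version of \textsc{Minimum Sum of Pairwise Distances Clustering} (sometimes called \textsc{Min-Sum Clustering} or \textsc{Sum-of-Squares}/\textsc{Sum-of-Pairs} clustering), or equivalently \textsc{Graph Partitioning} / \textsc{Cluster Editing}--style problems; but since $GPO$ fixes no number of groups and weights all intra-group pairs by $1-Sim$, the cleanest route is a reduction from \textsc{3-Partition} or from \textsc{Balanced Complete Bipartite Subgraph}, or --- my preferred choice --- from \textsc{Clique Cover} / the complement-graph version of \textsc{Graph Coloring}. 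First I would observe that membership in NP is immediate: given a proposed partition into groups, one can compute $GPO$ in polynomial time (it is a sum over $O(|\cS|^2)$ pairs, each $Sim$ evaluable in time linear in set sizes), so the decision problem "is there a partition with $GPO \le B$?" is verifiable in polynomial time.

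For hardness, the key step is to design a gadget so that the set-similarity structure encodes an arbitrary graph. The plan: given a graph $H=(V,E)$ on which we want to solve a hard problem, I would create one "element set" $S_v$ per vertex $v$, drawn from a token universe $\mathcal{T}$ engineered so that $Sim(S_u,S_v)$ takes one value (say high, close to $1$) when $\{u,v\}\in E$ and another value (say low, close to $0$, or exactly $0$) when $\{u,v\}\notin E$, while $Sim(S_v,S_v)=1$ always. A standard construction assigns to each edge $e\in E$ a private token $t_e$ and sets $S_v=\{t_e : e \ni v\} \cup (\text{padding tokens to equalize sizes})$; then $|S_u\cap S_v|\in\{0,1\}$ according to adjacency, and with suitable uniform padding the Jaccard (or cosine) similarity is a fixed constant $\alpha$ on edges and $0$ otherwise. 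Since $GPO=\sum_g\sum_{S_x,S_y\in\mathcal{G}_g}(1-Sim(S_x,S_y))$ counts the ordered diagonal pairs (each contributing $0$) plus off-diagonal pairs, minimizing $GPO$ becomes: partition $V$ into blocks to maximize $\sum_g (\text{number of adjacent ordered pairs inside block }g)$, i.e. to pack as much edge weight as possible inside blocks --- equivalently to minimize the total weight of edges cut. With the edge/non-edge weights chosen as above, the optimal solution puts each connected clique-like cluster together, and I would calibrate the budget $B$ so that $GPO\le B$ holds iff $H$ admits the desired combinatorial structure (e.g. a partition into $k$ cliques, giving a reduction from \textsc{Partition into Cliques}, which is NP-complete).

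The main obstacle --- and the step I expect to need the most care --- is that $GPO$ does not fix the number of groups, so the trivial "one giant group" or "all singletons" partitions are always feasible; the reduction must force a nontrivial group count purely through the cost landscape. "All singletons" gives $GPO=0$, which is globally minimal, so naively the problem is trivial! This means the correct formulation being proved NP-complete must be the constrained version --- minimize $GPO$ subject to producing exactly $n$ (or at most $n$) groups, or subject to the balance Property~1 established in Theorem~\ref{theorem:pe-partitioning} --- and the reduction has to exploit that constraint. So my plan is: (i) pin down precisely which constrained decision problem is claimed NP-complete (fixed $n$, or balanced groups of equal size); (ii) reduce from \textsc{3-Partition} (strongly NP-hard, good when balance/equal-size is the constraint) or from \textsc{Minimum $k$-Cut}/\textsc{$k$-way Graph Partitioning} (when $n$ is fixed), building the token-gadget above so that a balanced partition of the $S_v$'s with $GPO\le B$ corresponds exactly to a valid $3$-partition or a $k$-cut of weight $\le W$; (iii) verify the polynomial size of $\mathcal{T}$ (linear in $|V|+|E|$) and that all $Sim$ values are rational with polynomially bounded description, so the reduction is genuinely polynomial. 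Closing the loop between the chosen source problem's YES instances and the budget $B$ is the delicate bookkeeping, but once the gadget's similarity values are fixed it is a routine (if fiddly) counting argument.
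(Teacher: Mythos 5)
Your plan takes a genuinely different --- and methodologically sounder --- route than the paper. The paper's proof merely rewrites $GPO$ minimization as a 0--1 program with a quadratic objective ($\mathbf{A}\cdot\mathbf{A}^\intercal\odot\mathbf{D}$) and invokes the NP-completeness of 0--1 integer programming; that is a reduction in the wrong direction (it shows that $GPO$ minimization reduces \emph{to} a hard problem, which by itself establishes nothing about its hardness --- every problem in P also encodes as an ILP --- and the objective is not even linear as claimed). Your approach --- verify membership in NP, then reduce a known NP-hard problem \emph{to} $GPO$ minimization via a token gadget --- is what an NP-hardness claim actually requires. Your gadget (one private token per edge plus padding so that the similarity is a fixed constant $\alpha$ on adjacent pairs and $0$ otherwise) is standard and workable; the cleanest instantiation is probably a reduction from \textsc{Max-Cut} with $n=2$ groups: since $\sum_{x,y}(1-Sim(S_x,S_y))$ over \emph{all} pairs is a constant of the instance, minimizing the intra-group sum is exactly maximizing the total dissimilarity across the cut. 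You are also right to be careful about the source problem: \textsc{Minimum $k$-Cut} is polynomial-time solvable for each fixed $k$, so either take $n$ as part of the input or reduce from a problem (\textsc{Max-Cut}, min-sum $2$-clustering, \textsc{Partition into Cliques}) that is hard already at a small fixed group count.

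Your most valuable observation is one the paper glosses over: as literally stated, ``minimize $GPO$'' is trivial, because the all-singletons partition achieves $GPO=0$. The hardness claim only makes sense for the constrained problem in which the number of groups $n$ is prescribed (the paper's formulation smuggles this in by fixing the number of columns of $\mathbf{A}$, but the theorem statement does not say so). Any correct proof must target that constrained decision version, exactly as you propose in step (i). What remains in your plan is only the routine budget calibration you yourself flag as fiddly; there is no conceptual gap in your approach, whereas the paper's argument, as written, does not establish the theorem.
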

{
\begin{proof}
We give a brief proof by showing that minimizing $GPO$ is essentially a 0-1 integer linear programming problem, which has been shown to be NP-complete \cite{cook1971complexity}. 
More specifically, minimizing $GPO$ is equivalent to solving the following optimization problem:

\begin{equation}
\label{eq:np}
\begin{aligned}
\text{maximize } \quad & \mathbf{e}_{|\mathcal{D}|}\cdot[\mathbf{A}\cdot \mathbf{A}^\intercal\odot \mathbf{D}]\cdot \mathbf{e}_{|\mathcal{D}|}^\intercal \\
\text{subject to }\quad & \mathbf{e}_n\cdot\mathbf{A}^\intercal=\mathbf{e}_{|\mathcal{D}|}\\
\end{aligned}   
\end{equation}
where $\mathbf{A}$ is a $|\mathcal{D}|\times n$ matrix and $\mathbf{A}[x,g]=1$ if set $S_x$ belongs to group $\mathcal{G}_g$ and $\mathbf{A}[x,g]=0$ otherwise, and $\mathbf{D}$ of size $|\mathcal{D}|\times |\mathcal{D}|$ denotes the distance matrix where $\mathbf{D}[x,y]=1-Sim(S_x,S_y)$, and $\mathbf{e}_i$ is a row vector of length $i$ filled with ones. The goal is to find the $\mathbf{A}$ which satisfies the constraint and maximizes the objective.

The intuition behind Equation (\ref{eq:np}) can be described as follows: $\mathbf{A}\cdot\mathbf{A}^\intercal$ is a $|\mathcal{D}|\times |\mathcal{D}|$ matrix such that the value at position $[x,y]$ is 1 if $S_x$ and $S_y$ belong to the same group, and 0 otherwise. The element-wise product between $\mathbf{A}\cdot\mathbf{A}^\intercal$ and $\mathbf{D}$ masks out those pair-wise distances between sets belonging to different groups, and $ \mathbf{e}_{|\mathcal{D}|}\cdot[\mathbf{A}\cdot \mathbf{A}^\intercal\odot \mathbf{D}]\cdot \mathbf{e}_{|\mathcal{D}|}^\intercal$ sums the remaining distances, which is the same objective as $GPO$. The constraint $\mathbf{e}_n\cdot\mathbf{A}^\intercal=\mathbf{e}_{|\mathcal{D}|}$ guarantees that each set belongs to one and only one group. Therefore, minimizing $GPO$ is equivalent to solving Equation (\ref{eq:np}), which completes the proof.
\end{proof}
}


\vspace{-0.05in}
\subsection{Algorithmic Approaches}
\label{sec:heuristic}
In this section we propose several algorithmic approaches based on existing applicable clustering methods, which are expected to yield groups with low $GPO$ values. More specifically, we design a graph cut-based approach (PAR-G), a centroid-based approach (PAR-C), and a hierarchical approach (PAR-H).

{

\subsubsection{Graph cut-based method (PAR-G)}
\label{sec:par-g}
When $k$ or $\delta$ is fixed, it is possible to build an index structure specifically optimized for the workload.  
Dong et al. \cite{Dong2020Learning} propose a graph cut-based solution for (approximate) nearest neighbor search in $\mathbb{R}^d$ space by linking each point to its neighbors and partitioning the resulting graph into balanced subgraphs with the number of edges crossing different subgraphs minimized. Such a partitioning is shown to yield high pruning efficiency. Inspired by their approach, we design PAR-G, which takes $k$ or $\delta$ as one of its inputs, as follows:
\begin{enumerate}
    \item {\bf Similarity graph construction.} For a given $k$ in $k$NN query, construct the similarity graph, $G_{\mathcal{D}}$, of $\mathcal{D}$, such that $\forall S_x\in\mathcal{D}$, there exists a corresponding vertex $V_x$ in $G_{\mathcal{D}}$, and $\forall S_y\in\mathcal{D}$, if $S_y$ is one of the $k$ nearest neighbors of $S_x$, there is an edge between $V_x$ and $V_y$ in $G_{\mathcal{D}}$. For a given $\delta$ in range query, there is an edge between $V_x$ and $V_y$ if $Sim(S_x,S_y)\ge\delta$. 
    \item {\bf Graph cut.} Partition $G_{\mathcal{D}}$ into $n$ balanced subgraphs while minimizing the number of edges crossing different subgraphs. This can be done with existing graph partitioners \cite{karypis1999multilevel,DBLP:conf/wea/GottesburenHSW20}.
\end{enumerate}

}
\subsubsection{Centroid-based method (PAR-C)}
\label{sec:par-c}
Centroid-based methods \cite{hartigan1979algorithm} are iterative algorithms which at each iteration relocate an elements into a different cluster if such relocation improves the overall objective function. For our case, let $\phi(\mathcal{G})=\sum_{S_x,S_y\in\mathcal{G}}(1-Sim(S_x,S_y))$ be the sum of all pair-wise distances\footnote{Note that repetitively calculating $\phi(\mathcal{G})$ during the partitioning process is computational prohibitive, and thus we approximate $\phi(\mathcal{G})$ with randomly selected sets in $\mathcal{G}$ in the experiment (Section \ref{sec:exp-alg}).} in group $\mathcal{G}$, $S\in\mathcal{G}_i$ an arbitrary set, and $\Delta(S,\mathcal{G}_i,\mathcal{G}_{j})=\phi(\mathcal{G}_i\setminus S)+\phi(\mathcal{G}_j\cup S)-\phi(\mathcal{G}_i)-\phi(\mathcal{G}_j)$ the decrease of $GPO$ after moving $S$ from $\mathcal{G}_i$ to $\mathcal{G}_j$ ($i,j\in[1,n]$). To be more specific, our method works as follows:
\begin{enumerate}
    \item {\bf Initialization.} Randomly partition $\cS$ into $n$ groups;
    \item {\bf Relocation.} For each $S\in\cS$, suppose $S\in\mathcal{G}_i$. Find group $\mathcal{G}^*_j$ such that $\Delta(S,\mathcal{G}_i,\mathcal{G}^*_j)=\max_{S,\mathcal{G}_i,\mathcal{G}_j}\Delta(S,\mathcal{G}_i,\mathcal{G}_j)$ (denoted as ``the best group''). If $\Delta(S,\mathcal{G}_i,\mathcal{G}_j^*)>0$, relocate $S$ from $\mathcal{G}_i$ to $\mathcal{G}_j^*$. Repeat this step until no sets are relocated in an iteration.
    \item {\bf Simplification.} Considering the data size we deal with (see Section \ref{sec:exp}), finding ``the best group'' at each iteration would be too expensive. Therefore, we adopt the ``first-improvement'' variant \cite{telgarsky2010hartigan} of the algorithm, i.e., pick the first group $\mathcal{G}_j$ with $\Delta(S,\mathcal{G}_i,\mathcal{G}_j)>0$ rather than the best group.
\end{enumerate}


\subsubsection{Divisive clustering method (PAR-D)} Divisive clustering methods \cite{kaufman2009finding} start from the single cluster containing all elements and repeatedly split clusters until a desired number of clusters is reached. We reuse $\phi(\mathcal{G})$ introduced in Section \ref{sec:par-c} and use $idv\_d(S)$ to denote the sum of distances between $S$ and all other sets in the same group as $S$. PAR-D works as follows:

\begin{enumerate}
    \item {\bf Initialization.} Take $\cS$ as the initial group.
    \item {\bf Splitting.} Find group $\mathcal{G}^*=\arg\max_{\mathcal{G}_i\in\{\mathcal{G}_1,\mathcal{G}_2,\cdots\}}\phi(\mathcal{G}_i)$, where $\{\mathcal{G}_1,\mathcal{G}_2,\cdots\}$ denotes all current groups. Find set\\ $S^*=\arg\max_{S\in\mathcal{G}^*}idv\_d(S)$. Create a new group $\mathcal{G}^{new}=\{S^*\}$. For all other sets $S'\in\mathcal{G}^*$, move $S'$ to $\mathcal{G}^{new}$ if such movement reduces
   the overall $GPO$. Repeat this step until there are $n$ groups.
   \item {\bf Simplification.} Considering the data size we deal with, instead of finding $S^*$, we choose a random set in $\mathcal{G}^*$ to initialize $\mathcal{G}^{new}$, which is commonly adopted for group splitting \cite{guttman1984r}.
\end{enumerate}

{
\subsubsection{Agglomerative clustering method (PAR-A)} Agglomerative clustering \cite{rokach2005clustering} works in a bottom-up fashion by initially treating each element as a cluster and repeatedly merging clusters until a desired number of clusters is reached. We reuse $\phi(\mathcal{G})$ introduced in Section \ref{sec:par-c} to denote the sum of all pair-wise distances in group $\mathcal{G}$. PAR-A works as follows:

\begin{enumerate}
    \item {\bf Initialization.} Create group $\mathcal{G}_i=\{S_i\}$ for each $S_i\in\mathcal{D}$.
    \item {\bf Merging.} Find groups $\mathcal{G}^*_1,\mathcal{G}^*_2=\arg\min_{\mathcal{G}_i,\mathcal{G}_j\in\{\mathcal{G}_1,\mathcal{G}_2,\cdots\}}\\ \phi(\mathcal{G}_i\cup\mathcal{G}_j)$, where $\{\mathcal{G}_1,\mathcal{G}_2,\cdots\}$ denotes all current groups and $i\neq j$. Create a new group $\mathcal{G}^{new}=\mathcal{G}^*_1\cup\mathcal{G}^*_2$ and remove groups $\mathcal{G}^*_1$ and $\mathcal{G}^*_2$. Repeat this step until there are $n$ groups.
   \item {\bf Simplification.} Considering the data size we deal with, we adopt the heuristic that merging smaller groups (groups with smaller number of sets) usually results in smaller values of $\phi(\mathcal{G}_i\cup\mathcal{G}_j)$ and restrict that $\mathcal{G}^*_1$ is the smallest group (breaking ties randomly), and thus only $\mathcal{G}^*_2$ needs to be identified in each iteration.
\end{enumerate}

}

As we will demonstrate in our experimental study in Section 7, these heuristic approaches do not provide satisfactory performance. The structure of the $GPO$ problem objective does not resemble those targeted by well-studied clustering algorithms. In the next section, we explore the use of ML to perform such a partitioning.

\section{L2P: learn to partition sets into groups}
\label{sec:l2p}
{As pointed out by Bengio et al. \cite{bengio2018machine}, a machine learning approach to combinatorial optimization problems with well-defined objective functions, such as the Travelling Salesman Problem, has proven to be more promising than classical optimization methods with hand-wired rules in many scenarios, for the reason that it adapts solutions to the data and thus can uncover patterns in the specific problem instance as opposed to solving a general problem for every instance. It is widely agreed \cite{vinyals2015pointer,bahdanau2014neural} that ML-based methods are especially valuable in cases where expert knowledge of the problem domain may not be sufficient and some algorithmic decisions may not give satisfactory results. Our goal in this section, therefore, is to develop a machine learning method to optimize $GPO$.}
\subsection{Siamese Networks}
\label{sec:siamese}
Considering that the goal of optimizing $GPO$ is to maximize the overall intra-group similarity, we adopt Siamese networks \cite{bromley1994signature,fan2018adversarial} to solve the partitioning task. Siamese networks have been successfully utilized in deep metric learning tasks \cite{mueller2016siamese,neculoiu2016learning} in computer vision, capturing both intra-class similarity and inter-class discrimination in many challenging tasks including face recognition.


We design a Siamese network as shown in Figure \ref{fig:siamesenet} to learn the optimal partitioning. It consists of a pair of twin networks sharing the same set of parameters working in tandem on two inputs and generate two comparable outputs.   
\begin{figure}[h]
\vspace{-0.1in}
    \centering
    \includegraphics[width=2.2in]{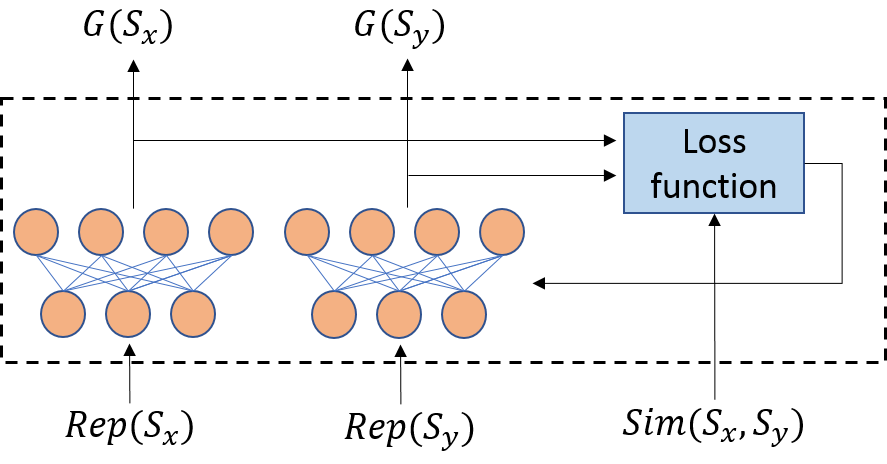}
    \caption{Siamese network}
    \label{fig:siamesenet}
\vspace{-0.1in}
\end{figure}

We use $Rep(S_x)$ and $Rep(S_y)$ to denote the vector representations of two sets $S_x$ and $S_y$, a pair of inputs to the twin networks, and use $G(S_x)$ and $G(S_y)$ to represent their respective group assignment indicted by the outputs of the twin networks respectively. Following Equation (\ref{eq:unified-obj}) we define the loss function of the Siamese network as follows:

\begin{footnotesize}
\begin{equation}
\label{eq:siameseloss}
    loss(S_x,S_y)=
    \begin{cases}
      (1-Sim(S_x,S_y)), & \text{if}\ G(S_x)=G(S_y)\\
      0, & \text{otherwise}
    \end{cases}
\end{equation}
\end{footnotesize}

{Equation (\ref{eq:siameseloss}) minimizes the intra-group dissimilarities by summing $(1-sim(S_x,S_y))$ as the losses, and penalizes imbalanced groups by counting pairwise dissimilarities only between sets in the same group. We use an example to illustrate how Equation (\ref{eq:siameseloss}) penalizes imbalanced partitioning. Suppose there are $N$ sets and dissimilarity between any pair of sets is the same at $d$. The task is to partition these sets into 2 groups, containing $N_1$ and $N_2$ sets respectively ($N_1+N_2=N$). Then the overall loss is $\frac{N_1(N_1-1)d}{2}+\frac{N_2(N_2-1)d}{2}=\frac{d}{2}[N_1(N_1-1)+N_2(N_2-1)]=\frac{d}{2}(N_1^2+N_2^2-N)\ge\frac{d}{2}(\frac{(N_1+N_2)^2}{2}-N)=\frac{d}{2}(\frac{N^2}{2}-N)$, and the bound is tight when $N_1=N_2$. Therefore, Equation (\ref{eq:siameseloss}) favors balanced partitioning.}

By training the Siamese network with sufficient samples drawn from $\mathcal{D}$, theoretically we can minimize the overall distances between all pairs of sets which belong to the same group, and thus the Siamese network is expected to give the partitioning result in which $GPO$ is minimized. Practically, however, we expect to achieve near-optimal partitioning only as the network is essentially performing local search. 
\vspace{-0.1in}
\subsection{Framework}
\label{sec:framework}
Although using Siamese networks to solve the optimization problem is a promising approach, training such networks turns out to be difficult for the following reasons:
\begin{enumerate}
    \item When dealing with real world data, we may need to partition sets into thousands of groups. Therefore, for an input set $S_x$, the network needs to make prediction on which group $S_x$ belongs to, among a collection of thousands of groups. It is well known \cite{gupta2014training} that training networks to tackle prediction problems involving thousands or more labels is challenging.
    \item What makes this task even more difficult is that unlike a classification problem, the label for each input (i.e., the optimal group) in this optimization problem is unknown, i.e., there is no ground truth regarding the labels/groups available. The only information we have is the loss if the two input sets are assigned into the same group. This makes the problem even more challenging than typical classification problems.
\end{enumerate}

The inherent difficulty of utilizing Siamese networks for this problem is the dimensionality (i.e., degrees of freedom) of the output space. In response to this challenge, we propose a learning framework consisting of a cascade of Siamese models, which partitions the database in a hierarchical fashion. Each Siamese network in the framework is responsible for partitioning a group of sets into two sub-groups. The framework is  illustrated in Figure~\ref{fig:cascade}.

\begin{figure}[h]
    \centering
    \includegraphics[height=1.3in]{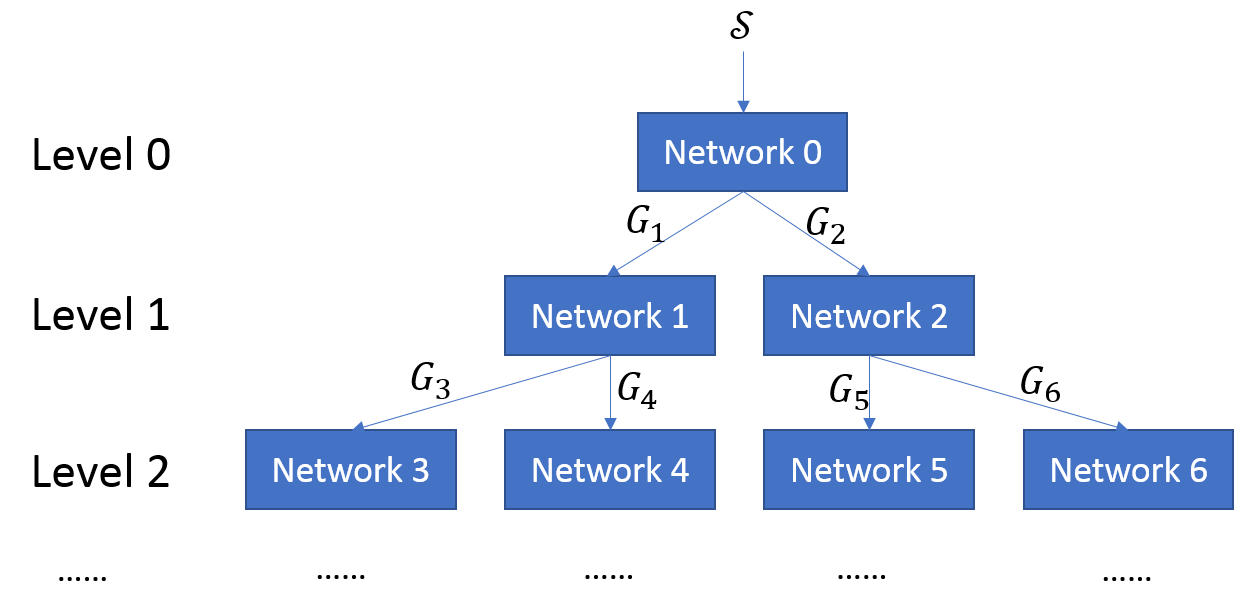}
    \vspace{-0.05in}
    \caption{Cascade framework}
    \label{fig:cascade}
    \vspace{-0.1in}
\end{figure}

At Level 0 of the framework, we train a Siamese network which takes each set in the entire database $\mathcal{D}$ and assigns it into one of two groups, $\mathcal{G}_1$ and $\mathcal{G}_2$, based on the loss function given in Equation~(\ref{eq:siameseloss}). Then at Level 1, we train two Siamese networks working on $\mathcal{G}_1$ and $\mathcal{G}_2$ respectively in the same fashion. Thus, they partition the entire database into four groups. We continue adding more levels to the cascade framework until all groups are small enough or a pre-defined threshold on the number of levels is reached. Since each model in the cascade is specialized to partition a group of sets into only two sub-groups the resulting classification problem can be solved effectively.


The architecture of the cascade models motivates the use of a hierarchical indexing structure, which we call Hierarchical TGM (or HTGM). More specifically, assuming the level of the cascade framework is $l$, and $0\le i<j<l$, we construct TGM$_i$ and TGM$_j$ based on the partitioning results at level $i$ and level $j$ respectively. Suppose a group at level $i$, say $\mathcal{G}_g$, is partitioned into several sub-groups at level $j$, say $\mathcal{SG}_1,\cdots,\mathcal{SG}_m$. If for a query $Q$, group $\mathcal{G}_g$ can be pruned by checking TGM$_i$, then all verification operations involving groups $\mathcal{SG}_1,\cdots,\mathcal{SG}_m$ can be eliminated. The construction can be easily generalized to HTGM with $h$ ($h>1$) levels. 
\vspace{-0.05in}
\subsection{PTR: a Set Representation Method}
\label{sec:setrep}
A Siamese network accepts vectors as input and thus the sets in $\mathcal{D}$ cannot be directly fed into the network. As a result we have to build a vector representation for each set. {Considering the time and space complexity of existing embedding methods such as Principal Component Analysis (PCA) or Multidimensional Scaling (MDS), they can hardly be applied to the target setting introduced in Section \ref{sec:exp} where millions or billions of sets are involved (see comparison regarding embedding cost in Section \ref{sec:exp-rep}). Besides, different from the objectives of these general-purpose embedding methods such as maximizing variance or preserving distance,} {our concern is to make sets containing different tokens more separable to benefit the training of the Siamese network. Intuitively, the representations that ease the training of the models are expected to bear the following property:
\begin{definition} \textbf{Set Separation-Friendly Property}.
$\forall t\in\mathcal{T}$, let $\mathcal{G}_t$ be the collection of sets containing $t$, and  $\mathcal{G}_{\neg t}$ be the collection sets not containing $t$, then $\mathcal{G}_t$ and  $\mathcal{G}_{\neg t}$ should to be easily separable in the representation space.
\end{definition}

Next we discuss how to construct such representations.
}
As the first step, we organize all tokens with a balanced binary tree such that tokens appear in leaf nodes and each leaf contains only one token. The height of the tree is thus $h=\lceil \log_2{\vert\mathcal{T}\vert}\rceil$. We mark the edge from a node to its left child with 1 and the edge to its right child with 0. An example of such a tree is depicted in Figure \ref{fig:tree}.
\begin{figure}[h]
\vspace{-0.1in}
    \centering
    \includegraphics[width=1.2in]{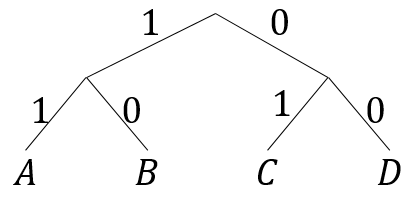}
    \caption{Tokens organized with a balanced tree}
    \label{fig:tree}
\vspace{-0.05in}
\end{figure}

We use $path_t$ to denote the path from the root to an arbitrary token $t$. Since each leaf contains only one token, no two tokens share the same path. We build a path table (PT) of all tokens defined as follows:

\begin{footnotesize}
\begin{equation}
    \label{eq:idt}
     \mbox{PT}[t,i]=
    \begin{cases}
      path_t[i], & \text{if}\ i\in[1,h]\\
      1-path_t[i], & \text{if}\ i\in[h+1,2h]
    \end{cases}
\end{equation}
\end{footnotesize}

An example of PT is provided in Table \ref{tab:pt}.

\begin{table}[h]
\caption{An example of path table (PT)}
\centering
\begin{tabular}{|c|c|c|c|c|}
\hline
Position & 1 & 2 & 3 & 4 \\ \hhline{|=|=|=|=|=|}
A & 1 & 1 & 0 & 0 \\ \hline
B & 1 & 0 & 0 & 1 \\ \hline
C & 0 & 1 & 1 & 0 \\ \hline
D & 0 & 0 & 1 & 1 \\ \hline
\end{tabular}
\label{tab:pt}
\end{table}


We propose a method called PTR (\underline{P}ath \underline{T}able \underline{R}epresentation) to build a representation for a given set $S$ as follows:
\begin{footnotesize}
\begin{equation}
    \label{eq:representation}
    Rep(S)[i]=\sum_{t\in S}PT[t,i],\ i\in[1,2h]
\end{equation}
\end{footnotesize}

In the above example, the representation of $\{A,B,C\}$ is $[2,2,1,1]$ and the representation of $\{B,D\}$ is $[1,0,1,2]$. { The second half of the path table (Positions 3 and 4) helps to reduce the chance that different sets have common representations. For example, if only the first half is used, then the representations of $\{A\}$, $\{B,C\}$, $\{A,D\}$, $\{B,C,D\}$ would all be $[1,1]$. We compare the set representations constructed on the full vs. half path tables in Section \ref{sec:exp-rep}.}

{PTR also naturally differentiates multisets containing the same collection of tokens but with different number of occurrences. For example, $Rep(\{A\})=[1,1,0,0]$ while $Rep(\{A,A\})=[2,2,0,0]$.}


The basic idea of the representation is to map the sets into a new space, in a way that determining collections of sets containing specific tokens can be easily performed. More specifically, set $S$ is placed in the representation space based on the presence or absence of all tokens in $S$, and consequently, given a collection of tokens $\mathcal{T}_c$, we can quickly locate all sets containing $\mathcal{T}_c$. This evidently yields the set separation-friendly property. To better illustrate this, we reuse the path table in Table \ref{tab:pt} and show that sets containing $B$ can be separated from other sets. For better visualization, we project the representation space onto the first two dimensions (Positions 1 and 2), and keep tokens $B$, $C$, and $D$ only  in Figure~\ref{fig:hyperplane}. Clearly all sets containing token $B$ fall into the striped area, defined by the axis aligned hyper-plane in the representation space passing from point $(1,0)$ (corresponding to $\{B\}$). Similarly all sets containing both $B$ and $C$ are located at the intersection of the axis aligned hyper-planes passing from $(1,0)$ and $(0,1)$ (corresponding to $\{C\}$). That way separating sets in the representation space based on token membership is conducted by determining hyper-plane intersections. We will demonstrate that such a representation is easier to learn and yields effective partitions in Section \ref{sec:exp}.
\begin{figure}
    \centering
    \includegraphics[height=1.1in]{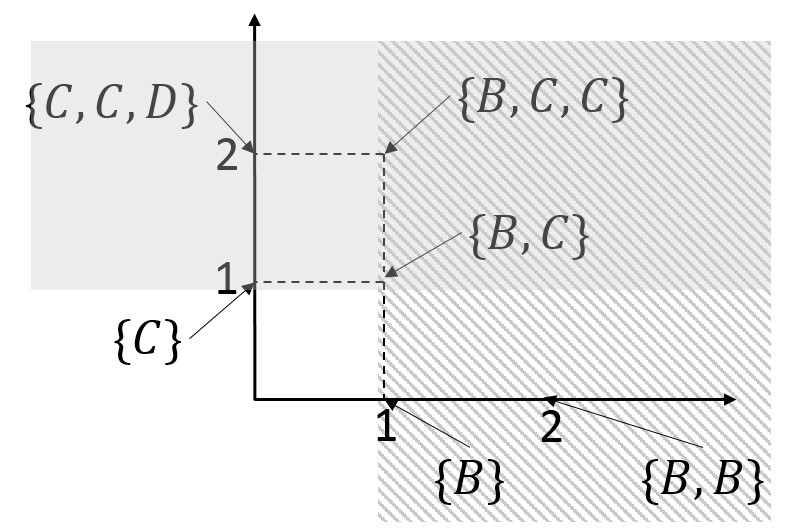}
    \caption{Separating sets}
    \label{fig:hyperplane}
\vspace{-0.1in}
\end{figure}

\vspace{-0.05in}
\section{Dealing with Updates}
\label{sec:updates}
Our discussions so far have assumed that the database $\cS$ and the token universe are fixed. In some cases, however, new sets may be added to the database after the index is built, and previously unseen tokens may appear. We therefore study how updates can be handled, with a focus on TGM, as HTGM can be updated level by level in the same way.

We first discuss the case where new sets are added but the token universe remains the same. Given a new set $S$, we add $S$ into the group $\mathcal{G}_g$ if the similarity upper bound between $\mathcal{G}_g$ and $S$ is the highest among all groups. When there exist multiple groups with the same highest $UB$, we insert $S$ into the group with the minimal number of sets, in line with the optimization target discussed in Section \ref{sec:paropt}. After insertion, we update the TGM accordingly, i.e., for all tokens $t\in S$, we set $M[g,t]=1$.

We now demonstrate how our approach naturally handles previously unseen tokens. This is an important feature of our solution as it is the first to deal with dynamic tokens. All previous attempts to a solution of this problem assumed a fixed token universe \cite{zhang2017efficient,DBLP:journals/tkde/ZhangWWX20}. Let $S$ be a set containing one or more new tokens. We insert $S$ into the database in the following two steps:
\begin{enumerate}
    \item Let $PS= S\cap\mathcal{T}$ be all tokens in $S$ that have been seen previously. We find the group with the highest similarity upper bound to $PS$, denoted by $\mathcal{G}_g$. If $PS=\emptyset$, then $\mathcal{G}_g$ is simply the group with the minimal number of sets. $S$ is inserted to $\mathcal{G}_g$.
    \item For any token $t_{new}\in S\setminus PS$, add a row in $M$ corresponding to $t_{new}$. For all tokens $t\in S$, set $M[g,t]=1$.
\end{enumerate}
Although the partitioning in Section \ref{sec:paropt} is optimized based on existing sets and tokens, inserting new sets and tokens will not severely impact the performance of the approach, as we demonstrate in Section \ref{sec:update}. 
\vspace{-0.05in}
\section{Experiments}\label{sec:exp}
In this section, we present a thorough experimental evaluation of our approach varying parameters of interest, comparing \les\ and its important components, L2P and PTR, with competing methods.
\vspace{-0.05in}
\subsection{Settings}
\label{sec:exp-settings}
{\bf Environment.} We run the experiments on a machine with an Intel(R) Core i7-6700 CPU, 16GB memory and a 500GB, 5400 RPM HDD (roughly 80MB/s data read rate). {We use HDD for fair comparison as other disk-based methods require no random access of the data (see Section~\ref{sec:sssbaselines}). However one could expect better performance of \les\ when running on SDD as it incurs random access of the data by skipping some groups, especially when the number of groups is large.} 

{\bf Implementation.} L2P is implemented with PyTorch, embedding methods in Section \ref{sec:exp-rep} and  partitioning methods in Section \ref{sec:exp-alg} are implemented with Python, and TGM and the set similarity search baselines are implemented in C++ and compiled using GCC 9.3 with -O3 flag. TGM is compressed by Roaring \cite{lemire2018roaring}, a well-performed bitmap compression technique.

{\bf Datasets.} KOSARAK \cite{KOSARAK-online}, LIVEJ \cite{mislove-2007-socialnetworks}, DBLP \cite{DBLP-online}, and AOL \cite{pass2006picture} are three popular datasets used for set similarity search problems and we adapt them for this reason. We also include a social network dataset from Friendster\cite{yang2015defining} (denoted by FS), where each user is treated as a set with his/her friends being the tokens; and a dataset from PubMed Central journal literature \cite{pmcref} (denoted by PMC), where each sentence is treated as a set with the words being the tokens\footnote{with basic data cleaning operations such as stop-words removal.}. Table \ref{tab:dataset} presents a summary of the statistics on these datasets. Considering the size of FS and PMC, we utilize them for disk-based evaluation in Section \ref{sec:sssbaselines} to examine the scalability of \les.
\begin{table}[h]
\vspace{-0.1in}
\caption{Dataset statistics}
\begin{tabular}{|c|r|r|c|r|r|}
\hline
\multirow{2}{*}{Dataset} & \multicolumn{1}{c|}{\multirow{2}{*}{$\vert\cS\vert$}} & \multicolumn{3}{c|}{Set size}                             & \multicolumn{1}{c|}{\multirow{2}{*}{$\vert\mathcal{T}\vert$}} \\ \cline{3-5}
                         & \multicolumn{1}{c|}{}                   & \multicolumn{1}{c|}{Max} & Min & \multicolumn{1}{c|}{Avg} & \multicolumn{1}{c|}{}                   \\ \hhline{|=|=|=|=|=|=|}
KOSARAK                  & 990,002                                 & 2,498                    & 1   & 8.1                      & 41,270                                  \\ \hline

LIVEJ                    & 3,201,202                               & 300                      & 1   & 35.1                     & 7,489,073                               \\ \hline
DBLP                     & 5,875,251                               & 462                      & 2   & 8.7                    & 3,720,067                               \\ \hline
AOL                      & 10,154,742                              & 245                      & 1   & 3.0                      & 3,849,555                               \\ \hline
FS                       & 65,608,366                              & 3,615                    & 1   & 27.5                     & 65,608,366                              \\ \hline
PMC                      & 787,220,474                             & 2,597                    & 1   & 8.8                      & 22,923,401                              \\ \hline
\end{tabular}
\label{tab:dataset}
\end{table}
\vspace{-0.1in}

{\bf Evaluation.} Following previous studies \cite{deng2018overlap,mann2016empirical,DBLP:journals/tkde/ZhangWWX20}, we adopt Jaccard similarity as the metric in our experimental evaluation. We stress however that any similarity measures satisfying the TGM applicability property introduced in Section \ref{sec:applicability} can be adopted in our framework with highly similar results as those reported below. {For each experiment, we randomly select $10$K sets in the corresponding dataset as the queries and report the average search time.} Unless otherwise specified, the indexing structure (TGM) and the data are memory-resident. We conduct disk-based evaluation in Section \ref{sec:sssbaselines}. We compare TGM with HTGM in Section \ref{sec:tgm-htgm}. We select $n$ (number of groups) for each dataset which results in the shortest query latency. The influence of $n$ is studied in Section \ref{sec:sensitivitytonandk}.


{\bf Network and Loss Function.} We consider Multi-Layer Perceptron (two hidden layers, each consisting of eight neurons) and Sigmoid activation function for L2P training in the experiment and leave the investigation of other networks as future work. Clearly the network has one neuron at the output layer. Let $O_x$ be the output on input set $S_x$. If $O_x<0.5$, then $S_x$ belongs to the first group; if $O_x\ge 0.5$, then $S_x$ belongs to the second group.

The loss function given in Equation (\ref{eq:siameseloss}) clearly describes the learning objective. However, it is difficult to train the network with that loss function as its gradient is 0 for most outputs (to be exact, the gradient is $(1-sim(S_x,S_y))$ when $O_x=O_y=0.5$, and is 0 elsewhere). For efficient training, we use the following surrogate loss function, which leads to the same global optimum as Equation (\ref{eq:siameseloss}) while introducing non-zero gradients:

\begin{footnotesize}
\begin{equation}
\label{loss:final}
    loss'(S_x,S_y)=
    \begin{cases}
      W(O_x,O_y)(1-Sim(S_x,S_y)), & \text{if}\ V(O_x,O_y);\\
      0, & \text{otherwise;}
    \end{cases}
\end{equation}
\end{footnotesize}
where $W(O_x,O_y) = (0.5-\vert O_x-O_y\vert)$, and  $V(O_x,O_y) = [(O_x\ge 0.5\wedge O_y\ge 0.5)\vee(O_x< 0.5\wedge O_y< 0.5)]$.

{
{\bf Initialization.}  Models at the first few levels of the Cascade framework deal with a large number of sets and incur long training time. To improve the training efficiency, we first sort all sets based on the minimal token contained in each set, and then partition all sets into 128 groups such that each group contains consecutive $|\mathcal{D}|/128$ sets, inspired by the idea of imposing sequential constraint to clustering tasks \cite{szkaliczki2016clustering}.  Since we always build TGM on the partitioning results at level 10 or higher which may contain thousands of groups, such initialization has minor impact on the performance but greatly reduces the training time. Note that initialization is not performed for the sampled dataset used in Section \ref{sec:exp-rep} due to its small size.
}

{\bf Training.} For the Siamese network partitioning an arbitrary group, we randomly select 40,000 pairs of sets in the group to generate training samples, relatively small compared to the data size. It is observed that further increasing the number of training samples do not significantly improve the pruning efficiency of the partitioning results. We stop partitioning a group if it contains less than 50 sets, and thus the number of groups at level $i$ may be less than $2^i$. The batch size is set to 256, the number of epochs is set to 3 (except for Section \ref{sec:train-cost} which reports the learning curves), and Adam is used as the optimizer. {The same sampling-and-training procedure is repeated for each model in the cascade framework, starting from level 0.}


{
\subsection{Model Convergence and Training Cost}
\label{sec:train-cost}
In this section we report the learning curves and the training costs. We observe that different models in the cascade framework introduced in Section \ref{sec:framework} yield similar learning curves, and thus we present the training loss of a random model at level 0 for each dataset (note that there are 128 models at level 0, see Section \ref{sec:exp-settings}, paragraph Initialization). The training losses and costs are presented in Figure \ref{fig:train-cost}.

\begin{figure}[h]
\vspace{-0.1in}
\centering
\includegraphics[width = 3.4in]{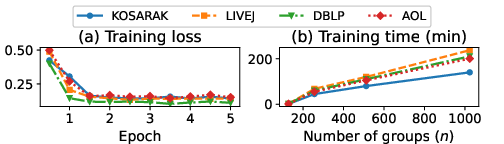}
\caption{Training losses and costs}
\label{fig:train-cost}
\vspace{-0.15in}
\end{figure}

As is clear from Figure \ref{fig:train-cost}(a), on all datasets used in the experiment, the training loss decreases rapidly and the model converges after approximately two epochs, attesting to the efficiency of the model training process. Also, as can be observed from Figure \ref{fig:train-cost}(b), the training cost grows linearly with respect to the number of groups, making \les\ scalable for large datasets. Besides, models at the same level of the cascade framework can be trained in parallel to further reduce the training cost, which is an interesting direction for future investigation.
}
\subsection{PTR vs. Set Representation Techniques}
\label{sec:exp-rep}
We compare PTR with other applicable set representation techniques. More specifically, we choose PCA \cite{jolliffe2016principal}, a widely-used linear embedding method, MDS \cite{de2004sparse}, a representative non-linear embedding approach, {and Binary Encoding \cite{han2011data}, an efficient categorical data embedding technique. We also include the variant of PTR constructed on the first half of the path table (see Section \ref{sec:setrep}), denoted by PTR-half.} Considering the complexity of PCA and MDS, we conduct experiments on sampled KOSARAK (sample ratio of $5\%$). We report the representation construction time of each method and the query answering time using the resulting partitioning results for $k$NN query ($k=10$) and range query ($\delta=0.7$) in Figure \ref{fig:rep}; similar trends are observed on other datasets and queries.
\begin{figure}[h]
\centering
\includegraphics[width = 3.4in]{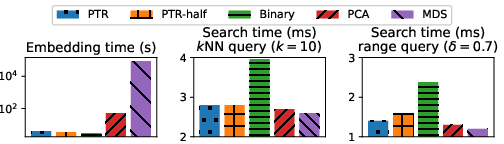}
\caption{Comparison of representation techniques}
\label{fig:rep}
\end{figure}

As can be observed from Figure \ref{fig:rep}, compared with PCA and MDS, PTR incurs much lower embedding time (10 to 20,000 times faster) while results in similar search time; {compared with Binary Encoding and PTR-half, PTR leads to faster query answering with comparable embedding cost. Binary Encoding assigns unique representations to different sets without considering set characteristics (e.g., tokens contained therein), and thus can hardly achieve any Set Separation-Friendly Property. PTR-half, as discussed in Section \ref{sec:setrep}, suffers from the risk that different sets may have common representations, and consequently these (dissimilar) sets are partitioned into the same group as they are not separable in the representation space, and the resulting search time thus is slightly longer than that of PTR.} The major advantage of PTR is that it integrates the Set Separation-Friendly Property introduced in Section \ref{sec:setrep} into set representations by allowing sets consisting of different tokens to be easily separable by axis-aligned hyper-planes in the embedding space, and thus eases the training of the downstream Siamese networks.
\subsection{L2P vs. Algorithmic Approaches}
\label{sec:exp-alg}
{We compare the learning-based partitioning approach, L2P, to the algorithmic methods introduced in Section \ref{sec:heuristic}, namely the graph cut-based method (PAR-G), centroid-based method (PAR-C), divisive clustering method (PAR-D), and {agglomerative clustering method (PAR-A)}, in terms of partitioning cost, including time cost and space cost, and query answering time.

{For PAR-G, we adopt PaToH \cite{ccatalyurek2011patoh}, a graph partitioning tool known to be efficient and performing well, to cut the graph.} We report the cost of different methods in partitioning KOSARAK into 1024 groups and the query answering time for $k$NN with $k=10$ in Figure \ref{fig:heuristic}; similar trends are observed on other datasets and queries. { Note that the partitioning time of L2P includes model training time and inference time (the time required to assign a set into a group), and the partitioning time of PAR-G consists of the $k$NN graph construction time and the graph cut time.} PAR-G is specially optimized for $k=10$ and the construction of its $k$NN graph is accelerated by \les.
\begin{figure}[h]
\vspace{-0.1in}
\centering
\includegraphics[width = 3.4in]{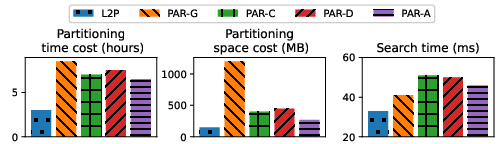}
\vspace{-0.1in}
\caption{Comparison of partitioning methods}
\label{fig:heuristic}
\vspace{-0.1in}
\end{figure}

As depicted in Figure \ref{fig:heuristic}, L2P provides the fastest search while only incurs a small fraction of partitioning time and space cost compared to competitors (saving $80\%$ partitioning time and $99\%$ space compared with PAR-G).
The reason why L2P incurs less partitioning time and space overhead is that, as described in Section \ref{sec:siamese} and Section \ref{sec:exp-settings}, by training the models on a small portion of data, L2P is better positioned to approach the optimal partitioning where the $GPO$ is minimized, while other techniques work on the entire dataset and require  (sometimes repetitively) computing the $GPO$ of arbitrary groups (or pairs) of sets. Besides, only model parameters and the training samples in a mini batch have to be saved in memory for L2P, with minimal storage overhead, while other techniques require materializing a large amount of intermediate partitioning results (and the entire $k$NN graph for PAR-G) in memory, incurring prohibitive space consumption.

By directly optimizing the $GPO$ which integrates the two desired properties of a partitioning with higher pruning efficiency, L2P is able to outperform PAR-G, the objective of which is minimizing the number of edges in the similarity graph crossing different sub-graphs rather than $GPO$. {PAR-C, PAR-D,  and PAR-A, although also aim to optimize the $GPO$, suffer from severe local optimality problems}: a set is moved to a group only if such movement reduces the overall $GPO$, while in many cases movements temporarily increasing the $GPO$ must be allowed to determine a global optimum. For example, let $S_i\in\mathcal{G}_i$, $S_j\in\mathcal{G}_j$ be two sets. Assume that moving $S_i$ to $\mathcal{G}_j$ and moving $S_j$ to $\mathcal{G}_i$ when individually carried out both increase the $GPO$, and consequently $S_i$ remains in $\mathcal{G}_j$ and $S_j$ in $\mathcal{G}_i$. However, swapping $S_i$ and $S_j$ may reduce the overall $GPO$ and thus leads to better partitioning. Such swapping cannot be achieved based on the strategy followed by PAR-C and PAR-D. { Similarly, the strategy of PAR-A does not allow the merge of groups temporarily increasing the overall $GPO$, which however may be necessary in identifying a global optimum.}
}

\subsection{Sensitivity to Number of Groups and $k$}
\label{sec:sensitivitytonandk}
We test the performance of \les\  ~in terms of query processing time on $k$NN queries, varying the number of groups $n$ and the result size $k$. The results are presented in Figure \ref{fig:numgroupresultsize}.
\begin{figure}[h]
\vspace{-0.1in}
\centering
\includegraphics[width = 3.4in]{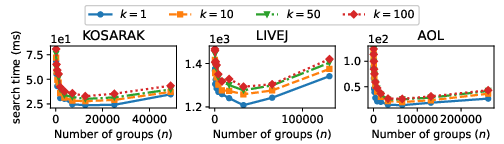}
\vspace{-0.1in}
\caption{Sensitivity to the number of groups and result size}
\label{fig:numgroupresultsize}
\vspace{-0.1in}
\end{figure}

Increasing $n$ accelerates query answering, regardless of the result size. This is because with more groups, as indicated by Equation (\ref{eq:unified-obj}), the overall pruning efficiency of \les\ can be improved, meaning fewer candidates have to be checked. Increasing $n$ benefits search time up to a point. {In particular we observe a diminishing return behavior with respect to search performance as $n$ increases further. The reason is that, with a sufficiently large number of groups, sets are already well-separated, and further increasing $n$ brings no significant change to the pruning efficiency but incurs higher index (TGM) scan cost.} Moreover, search time increases for larger $k$, which is consistent with our analysis in Equation (\ref{eq:range-ub-analysis}), as in general a larger $k$ in $k$NN search is analogous to a smaller $\delta$ in range search and thus the pruning efficiency is lower.

{While determining the optimal number of groups for partitioning is a known NP-hard problem \cite{james2013introduction}, we empirically observe from the experiments that setting the number of groups at approximately $0.5\%|\mathcal{D}|$ leads to the lowest search time, where $|\mathcal{D}|$ is the number of sets in the corresponding dataset.}

\subsection{\les\ vs. Set Similarity Search Baselines}
\label{sec:sssbaselines}
In this section, we compare \les\ with  existing set similarity search approaches to answering $k$NN and range queries in memory-based and disk-based settings respectively. Among tree-based set similarity search approaches to date, DualTrans \cite{DBLP:journals/tkde/ZhangWWX20} provides the fastest query processing. { For inverted index-based methods, we adopt the method proposed in \cite{wang2019leveraging} (denoted by InvIdx), which  yields the state-of-the-art performance for set similarity join tasks. Note that we exclude methods requiring index construction during query time \cite{deng2018overlap,xiao2011efficient,deng2015efficient} as the index construction cost is much higher than the query cost. Since inverted index-based methods are designed for range queries and do not naturally support $k$NN queries, we modify the query answering algorithm of InvIdx for $k$NN queries as follows.
    (1) Given a query set $Q$ and a result size $k$, start with threshold $\delta=1.0$ and use InvIdx to find candidate sets from $\cS$ whose similarity with $Q$ exceeds $\delta$, denoted by $\mathcal{C}$.
    (2) Identify the temporary $k$NN results from $\mathcal{C}$, denoted by $\mathcal{R}_{k}$. If the minimal similarity between any set in $\mathcal{R}_{k}$ and $Q$ exceeds $\delta$, terminate. Otherwise, decrease $\delta$ by $z$, use InvIdx to find candidates sets with the new $\delta$, update $\mathcal{C}$ accordingly, and repeat the step.
    (3) Upon termination, $\mathcal{R}_k$ is guaranteed to be the $k$NN to $Q$, as the similarity between any sets in $\cS\setminus\mathcal{C}$ and $Q$ does not exceed the current $\delta$. The value of $z$ is tuned for faster query answering.
}

In addition, we also include a brute-force approach, i.e., computing the similarity between the query set and all other sets to derive the results, for completeness of comparison.

{In Figure \ref{fig:construct}, we show the index size and index construction time for all methods. It is clear that the indexing structure of \les, namely TGM, is much more lightweight, requiring up to $90\%$ less space than DualTrans and InvIdx. The major time cost of constructing TGM comes from the model training, which however is a preprocessing step incurring only a one-time cost and can be further reduced as discussed in Section~\ref{sec:train-cost}.}
\begin{figure}[t]
\centering
\includegraphics[width = 3.4in]{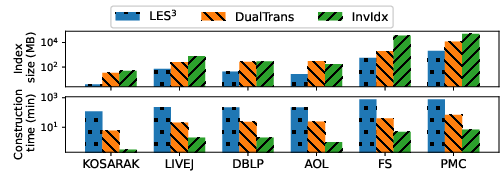}
\vspace{-0.1in}
\caption{Index size and construction time}
\label{fig:construct}
\vspace{-0.05in}
\end{figure}

In Figure \ref{fig:compare-memory-all} we compare the performance of the four methods in a memory-based setting. We observe that \les\ outperforms competitors for both $k$NN queries and range queries, accelerating the query answering by 2 to 20 times. DualTrans incurs longer search time as it uses an R-tree to organize all sets, with each set being represented with a $d$-dimensional vector ($d$ can be tuned for faster pruning). When the value of $d$ is small, sets containing different tokens cannot be clearly separated based on their representations, while when the value of $d$ is large, using R-tree to organize the vectors incurs high overlap between the bounding boxes of nodes on the R-tree, as previous research indicates \cite{indyk1998approximate}. Besides, scanning the R-tree is expensive, which is not worthwhile considering that set similarity (e.g., Jaccard similarity) can usually be computed efficiently. {While InvIdx provides comparable performance with \les\ for range queries with large $\delta$, it incurs greater search latency for $k$NN queries, especially when the average set size is large (e.g., on KOSARAK and LIVEJ). The reason is that, with InvIdx filtering operations need to be repeated for each candidate set (or multiple candidates with some common characteristics), and larger set size and $k$NN queries both enlarge the number of candidates, leading to sub-par query performance.}

\begin{figure*}[t]
\centering
\includegraphics[width = 6.9in]{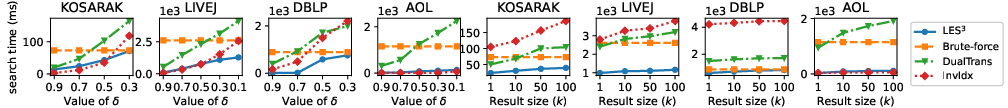}
\caption{Comparison to baselines in memory-based settings for range queries (left) and $k$NN queries (right)}
\label{fig:compare-memory-all}
\vspace{-0.1in}
\end{figure*}

\begin{figure}[t]
\centering
\includegraphics[width = 3.4in]{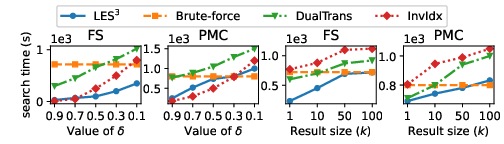}
\caption{Comparison to baselines in disk-based settings for range queries (left) and $k$NN queries (right)}
\label{fig:compare-disk-k}
\end{figure}

In contrast,  we use TGM to compute the upper bounds between a query set and a group of sets; obtaining all bounds requires only $O(\vert S\vert*\vert\mathcal{G}\vert)$ time, which is relatively cheap. Although the search time of \les\ increases for range query as $\delta$ decreases, \les\ provide much faster query answering under a wide range of $\delta$.


{We compare the performance of the four methods in the disk-based setting in Figure \ref{fig:compare-disk-k}. Note that for DualTrans and InvIdx, only the part of the index that is necessary to the query answering, such as R-nodes on the search path and inverted indexes related to the query set, is retrieved into memory to reduce I/O cost.} We observe that \les\ generally provides faster search compared with competitors, accelerating the query answering by 2 to 10 times. The reasons why \les\ incurs lower search time are: (1) Sets sharing no or very few common tokens with the query set can be easily pruned without being retrieved into memory; and (2) Since sets in the same group are checked jointly during the searching process; materializing a group of sets continuously on disk minimizes the data transfer delay. { DualTrans and InvIdx, on the contrary, incur longer search latency and are outperformed by the Brute-force method for a wide range of $k$ and $\delta$. Besides the drawbacks discussed above in the memory-based setting, the search strategies of DualTrans and InvIdx incur repetitive retrieval of data with random disk access, which results in higher I/O cost (more pages retrieved, higher seek and rotation overhead, etc.), making them less efficient in the disk-based setting.}
\subsection{TGM vs. HTGM}
\label{sec:tgm-htgm}
We evaluate the performance of TGM and HTGM to determine whether building a hierarchical index pays off. Intuitively, whether it benefits the query processing largely depends on the similarity distribution. For example, in cases where very few sets share common tokens, one can prune a large number of candidates using the matrices at the first few levels of HTGM, avoiding scanning the larger matrices at finer levels. However, in cases where most sets are similar, the small matrices at the first few levels of HTGM may provide no pruning efficiency at all. We assume that the similarity between sets in $\cS$ can be modeled by a power-law distribution $P[sim=v]\sim v^{-\alpha}$, where $P[sim=v]$ denotes the probability that the similarity between any two sets is $v$, $v\in[0,1]$, $\alpha\in[1,\infty)$. We generate multiple synthetic databases consisting of 20,000 sets and 20,000 tokens each, by varying the value of $\alpha$. We train a cascade model with 9 levels (including level 0). We use the partitioning results at level 8 (256 groups) to build the TGM, and use the partitioning results at level 5 (32 groups) and level 8 to build the HTGM. We compare HTGM and TGM from two aspects. First, the index access cost, measured by the number of columns in the HTGM or TGM that are checked when processing the query. Second, the computational cost, measured by the number of similarity calculations. We measure the ratio of cost between HTGM and TGM, and the results are shown in Figure~\ref{fig:tgm-vs-htgm}. It is evident that HTGM outperforms TGM when the value of $\alpha$ is large, i.e., most sets are dissimilar. This is in line with the discussions in Section \ref{sec:tgm-htgm}. 

\noindent\begin{minipage}{.24\textwidth}
\centering
\includegraphics[width=\textwidth]{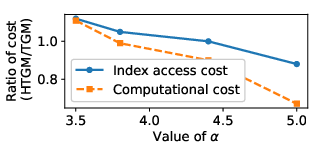}
\captionof{figure}{TGM vs. HTGM}
\label{fig:tgm-vs-htgm}            
\end{minipage}
\noindent\begin{minipage}{.24\textwidth}
\centering
\includegraphics[width=\textwidth]{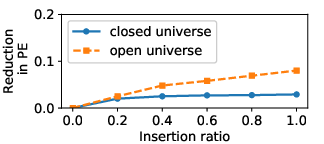}
\captionof{figure}{Handling updates}
\label{fig:updates}            
\end{minipage}
\subsection{Handling Updates}
\label{sec:update}
We evaluate the performance of the proposed approach under updates. Two cases are considered: (1) {\em closed universe}, meaning the new sets to be inserted contain only tokens from the original database, and (2) {\em open universe}, where the new sets may contain previously unseen tokens. Let $\mathcal{D}$ be the original database, $\mathcal{D}^{closed}$ be the collection of new sets to be inserted under a closed universe, and $\mathcal{D}^{open}$ be the collection of new sets to be inserted under an open universes. For the experiment, we set insertion ratio ($\vert\mathcal{D}^{closed}\vert/\vert\mathcal{D}\vert$ and $\vert\mathcal{D}^{open}\vert/\vert\mathcal{D}\vert$) in range $[0,1]$, and half of the tokens in $\mathcal{D}^{open}$ are from $\mathcal{D}$ and half are new. We compute the decrease in pruning efficiency after insertion compared to obtaining a partitioning from scratch (namely running L2P) on $\mathcal{D}\cup \mathcal{D}^{closed}$ or $\mathcal{D}\cup \mathcal{D}^{open}$ (referred to as {\em re-build}). We give the results for $k$NN query with $k=10$ on KOSARAK in Figure \ref{fig:updates}; the experiments on the other datasets show similar trends.

Figure \ref{fig:updates} depicts the percentage of pe reduction compared to re-build. The pruning efficiency decreases slightly as more new sets are inserted into the database. Insertions under an open universe have a higher impact on performance. The reason is that the tokens from the same universe mainly follow a similar distribution and the partition results obtained on the original data are still sufficient, while there is no prior knowledge regarding the distribution of new tokens. We observe, however, that the overall pruning efficiency is resistant to insertions (experiencing a decrease by at most $8\%$), which attests to the robustness of the proposed approach.

\section{Related Work}
The problem of processing set similarity queries, including set similarity search \cite{kim2012efficient,zhang2017efficient,DBLP:journals/tkde/ZhangWWX20} and set similarity joins \cite{deng2018overlap,deng2019lcjoin,fernandez2019lazo,mann2016empirical,wang2019leveraging}, has attracted remarkable research interest recently. Zhang et al. \cite{zhang2017efficient,DBLP:journals/tkde/ZhangWWX20} propose to transform sets into scalars or vectors with the relative distance between sets preserved, and organize the transformed sets with B+-trees or R-trees, which facilitate the use of tree-based branch-and-bound algorithms for similarity search. The major drawback of their work is that, as shown in the experiments, the tree structure can easily grow larger than the original data and thus using the index for filtering incurs a significant cost, especially when the index and the data are stored externally. Most prior research in the area of set similarity join focuses on threshold-join queries and follows the filter-and-verify framework. In the filter step, existing methods mainly adopt (1) prefix-based filters \cite{bouros2012spatio,wang2012can,xiao2011efficient}, based on the observation that if the similarity between two sets exceeds $\delta$, then they must share common token(s) in their prefixes of length $m$; and (2) partition-based filters \cite{arasu2006efficient,deng2015efficient,deng2018overlap,xiao2009top}, which partition a set into several subsets so that two sets are similar only if they share a common subset. 
Set similarity queries in distributed environments \cite{das2014clusterjoin,metwally2012v} and approximate queries \cite{satuluri2011bayesian,schelter2016tracking} are beyond the scope of this paper but represent promising directions for further investigation.

Indexing is an important and well-studied problem in data management and recent works have utilized machine learning to learn a CDF or to partition the data space for traditional database indexing \cite{li2020bindex,yue2020analysis,lang2020tree,li2020lisa,ding2020alex,nathan2020learning,DBLP:conf/iclr/SablayrollesDSJ19,Dong2020Learning}. In this paper, we complement recent work by studying the applicability of machine learning techniques to assist index construction for 
set similarity search problems. Our results show that the proposed methods offer vast advantages over traditional techniques.


Embedding sets and other entities consisting of discrete elements has been well-studied. The most natural way to represent such data types is $n$-hot encoding, but the resulting vectors are often very long and sparse. Dimensionality reduction techniques are used to compress the encoding vectors with different focuses: maximizing variances \cite{pearson1901liii}, preserving distances \cite{borg2003modern}, solving the crowding problem \cite{maaten2008visualizing}, etc. Recent advances in document embedding, e.g., word2vec \cite{mikolov2013efficient}, BERT \cite{devlin2018bert}, also provide new perspectives to construct representations of sets. Compared to these methods, the PTR proposed in Section \ref{sec:setrep} utilizes a very efficient method to produce relatively short representations and is optimized for the specific problem at hand. 

\section{Conclusions}
In this paper, we have studied the problem of {\em exact set similarity search}, and designed \les, a filter-and-verify approach for efficient query processing. Central to our proposal is TGM, a simple yet effective structure that strikes a balance between index access cost and effectiveness in pruning candidate sets. We have revealed the desired properties of optimal partitioning in terms of pruning efficiency under the uniform token distribution assumption. We develop a learning-based approach, L2P, utilizing a cascade of Siamese networks to identify partitions. A  novel set representation method, PTR, is developed to cater to the requirements of network training. The experimental results have demonstrated the superiority of \les\ over other applicable approaches. 


\balance
\bibliographystyle{ACM-Reference-Format}
\bibliography{reference}

\end{document}